\documentclass[12pt]{article}%
\usepackage{amssymb}
\usepackage{amsfonts}
\usepackage{amsmath}
\usepackage{graphicx}%
\setcounter{MaxMatrixCols}{30}
\providecommand{\U}[1]{\protect\rule{.1in}{.1in}}
\newtheorem{theorem}{Theorem}
\newtheorem{acknowledgement}[theorem]{Acknowledgement}

\newtheorem{lemma}[theorem]{Lemma}
\newtheorem{notation}[theorem]{Notation}

\newtheorem{proposition}[theorem]{Proposition}
\newtheorem{remark}[theorem]{Remark}

\newenvironment{proof}[1][Proof]{\noindent\textbf{#1.} }{\ \rule{0.5em}{0.5em}}
\begin{document}

\title{Symplectic Polar Duality, Quantum Blobs, and Generalized Gaussians}
\author{Maurice de Gosson\thanks{Corresponding author: maurice.de.gosson@univie.ac.at}
and Charlyne de Gosson\\University of Vienna\\Faculty of Mathematics (NuHAG)\\Oskar-Morgenstern-Platz 1\\1090 Vienna AUSTRIA}
\maketitle
\tableofcontents

\begin{abstract}
We apply the notion of polar duality from convex geometry to the study of
quantum covariance ellipsoids in symplectic phase space. We consider in
particular the case of \textquotedblleft quantum blobs\textquotedblright%
\ introduced in previous work; quantum blobs are the smallest symplectic
invariant regions of the phase space compatible with the uncertainty principle
in its strong Robertson--Schr\"{o}dinger form. We show that these phase space
units can be characterized by a simple condition of reflexivity using polar
duality, thus improving previous results. We apply these geometric
constructions to the characterization of pure Gaussian states in terms of
partial information on the covariance ellipsoid, which allows us to formulate
statements related to symplectic tomography.

\end{abstract}

\textbf{Keywords}: polar duality; Lagrangian plane; symplectic capacity; John
ellipsoid; uncertainty principle

\textbf{MSC 2020}: 52A20, 52A05, 81S10, 42B35

\section{Introduction}

In a recent paper \cite{gopolar} we discussed the usefulness of the geometric
notion of polar duality in expressing the uncertainty principle of quantum
mechanics. We suggested that a quantum system localized in the position
representation in a set $X$ cannot be localized in the momentum representation
in a set smaller than its polar dual $X^{\hbar}$, the latter being defined as
the set of all $p$ in momentum space such that $px\leq\hbar$ for all $x\in X$.
In the present work we go several steps further by studying the product sets
$X\times X^{\hbar}$. The first observation is that when $X$ is an ellipsoid,
then the John ellipsoid of $X\times X^{\hbar}$ is a \textquotedblleft quantum
blob\textquotedblright, to which one canonically associates a squeezed
coherent state. This leads us to study more general phase space ellipsoids
$\Omega$ viewed as covariance ellipsoids of a quantum state, and we find that
the usual quantum condition for such ellipsoids can be restated in a simple
way using polar duality between intersections with coordinate planes and
orthogonal projection. Thus, we arrive at a purely geometric characterization
of quantization.

The main results of this paper are:

\begin{itemize}
\item In Theorem \ref{Thm3} we use the notion of \textquotedblleft symplectic
polar duality\textquotedblright\ to characterize those phase space ellipsoids
who arise as covariance ellipsoids $\Omega$ of a quantum state. This result is
very much related to what is called in quantum physics \textquotedblleft
symplectic tomography\textquotedblright\ \cite{ib} since it gives global
information by studying the local information obtained by considering the
intersection of $\Omega$ with a Lagrangian plane;

\item Theorem \ref{Thm1}: we prove that a centered phase space ellipsoid
$\Omega$ is a quantum blob (\textit{i.e.} a symplectic ball with
radius$\sqrt{\hbar}$ \cite{Birk,blob,goluPR}) if and only if the polar dual of
the projection of $\Omega$ on the position space is the intersection of
$\Omega$ with the momentum space; this considerably strengthens a previous
result obtained in \cite{gopolar};

\item Theorem \ref{Thm2}; it is an analytical version of Theorem \ref{Thm1},
which we use to give a simple characterization of pure Gaussian states in
terms of partial information on the covariance ellipsoid of a Gaussian state.
This result is related to the so-called \textquotedblleft Pauli
problem\textquotedblright.
\end{itemize}

\begin{notation}
The configuration space of a system with $n$ degrees of freedom will in
general be written $\mathbb{R}_{x}^{n}$, and its dual (the momentum space)
$\mathbb{R}_{p}^{n}$. The position variables will be written $x=(x_{1}%
,...,x_{n})$ and the momentum variables $p=(p_{1},...,p_{n})$. The duality
form (identified with the usual inner product) is $p\cdot x=p_{1}x_{1}%
+\cdot\cdot\cdot+p_{n}x_{n}$. The product $\mathbb{R}_{x}^{n}\times
\mathbb{R}_{p}^{n}$ is identified with $\mathbb{R}^{2n}$ and is equipped with
the standard symplectic form $\sigma$ defined by $\omega(z,z^{\prime})=p\cdot
x^{\prime}-p^{\prime}\cdot x$ if $z=(x,p)$, $z^{\prime}=(x^{\prime},p^{\prime
})$. The corresponding symplectic group is denoted $\operatorname*{Sp}(n)$:
$S\in\operatorname*{Sp}(n)$ if and only $\omega(Sz,Sz^{\prime})=\omega
(z,z^{\prime})$ for all $z,z^{\prime}$. We denote by $\operatorname*{Sym}%
_{++}(n,\mathbb{R})$ the cone of real positive definite symmetric $n\times n$
matrices, and by $GL(n,\mathbb{R})$ the general (real) linear group (the
invertible real $n\times n$ matrices).
\end{notation}

\section{A Geometric Quantum Phase Space}

\subsection{Polar duality and quantum states}

Let $X\subset\mathbb{R}_{x}^{n}$ be a convex body: $X$ is compact\ and convex
and has non-empty interior $\operatorname*{int}(X)$. If $0\in
\operatorname*{int}(X)$ we define the $\hbar$-polar dual $X^{\hslash}%
\subset\mathbb{R}_{p}^{n}$ of $X$ by
\begin{equation}
X^{\hslash}=\{p\in\mathbb{R}^{m}:\sup\nolimits_{x\in X}(p\cdot x)\leq\hbar\}
\label{omo2}%
\end{equation}
where $\hbar$ is a positive constant (we have $X^{\hslash}=\hbar X^{o}$ where
$X^{o}$ is the traditional polar dual dual from convex geometry). The
following properties of polar duality are obvious \cite{Vershynin}:

\begin{itemize}
\item $(X^{\hslash})^{\hbar}=X$ (reflexivity) and $X\subset Y\Longrightarrow
Y^{\hslash}\subset X^{\hslash}$ (anti-monotonicity),

\item For all $L\in GL(n,\mathbb{R})$:%
\begin{equation}
(LX)^{\hbar}=(L^{T})^{-1}X^{\hslash} \label{scaling}%
\end{equation}
(scaling property). In particular $(\lambda X)^{\hbar}=\lambda^{-1}X^{\hslash
}$ for all $\lambda\in\mathbb{R}$, $\lambda\neq0$.
\end{itemize}

We can view $X$ and $X^{\hslash}$ as subsets of phase space by the
identifications $\mathbb{R}_{x}^{n}\equiv\mathbb{R}_{x}^{n}\times0$ and
$\mathbb{R}_{p}^{n}\equiv0\times\mathbb{R}_{p}^{n}$. Writing $\ell
_{X}=\mathbb{R}_{x}^{n}\times0$ and $\ell_{P}=0\times\mathbb{R}_{p}^{n}$ the
transformation $X\longrightarrow X^{\hslash}$ is a mapping $\ell
_{X}\longrightarrow\ell_{P}$. With this interpretation formula (\ref{scaling})
can be rewritten in symplectic form as%
\begin{equation}
(M_{L^{-1}}X)^{\hbar}=M_{L^{T}}X^{\hslash}\label{ML}%
\end{equation}
where $M_{L}=%
\begin{pmatrix}
L^{-1} & 0\\
0 & L^{T}%
\end{pmatrix}
$ is in $\operatorname*{Sp}(n)$. Notice that $M_{L}:\ell_{X}\longrightarrow
\ell_{X}$ and $M_{L}:\ell_{P}\longrightarrow\ell_{P}$.

Suppose now that $X$ is an ellipsoid centered at the origin:
\begin{equation}
X=\{x\in\mathbb{R}_{x}^{n}:Ax\cdot x\leq\hbar\} \label{Xell}%
\end{equation}
where $A\in\operatorname*{Sym}_{++}(n,\mathbb{R})$. The polar dual
$X^{\hslash}$ is the ellipsoid%
\begin{equation}
X^{\hslash}=\{p\in\mathbb{R}_{p}^{n}:A^{-1}p\cdot p\leq\hbar\}.
\label{Xelldual}%
\end{equation}
In particular the polar dual of the ball $B_{X}^{n}(\sqrt{\hbar}%
)=\{x:|x|\leq\sqrt{\hbar}\}$ is $(B_{X}^{n}(\sqrt{\hbar}))^{\hbar}=B_{P}%
^{n}(\sqrt{\hbar})$.

Let $\Omega$ be a convex body in $\mathbb{R}^{2n}$. Recall \cite{Ball} that
the John ellipsoid $\Omega_{\mathrm{John}}$ is the unique ellipsoid in
$R^{2n}$ with maximum volume contained in $\Omega$. If $M\in GL(2n,\mathbb{R}%
)$ then
\begin{equation}
(M(\Omega))_{\mathrm{John}}=M(\Omega_{\mathrm{John}}). \label{JL1}%
\end{equation}

In previous work \cite{blob,goluPR} we called the image of the phase space
ball $B^{2n}(\sqrt{\hbar})$ by some $S\in\operatorname*{Sp}(n)$ a
\textquotedblleft quantum blob\textquotedblright. Quantum blobs are minimum
quantum uncertainty phase space units, and can be used to restate the
uncertainty principle of quantum mechanics in a symplectically invariant form
\cite{go09}. The product $X\times X^{\hslash}$ contains a \emph{unique}
quantum blob:

\begin{proposition}
\label{Prop1}Let $X=\{x:Ax\cdot x\leq\hbar\}$. The John ellipsoid of the
quantum state $X\times X^{\hslash}$ is a a quantum blob, namely
\begin{equation}
(X\times X^{\hslash})_{\mathrm{John}}=M_{A^{1/2}}(B^{2n}(\sqrt{\hbar}))
\label{Johnma}%
\end{equation}
where $M_{A^{1/2}}=%
\begin{pmatrix}
A^{-1/2} & 0\\
0 & A^{1/2}%
\end{pmatrix}
\in\operatorname*{Sp}(n)$.
\end{proposition}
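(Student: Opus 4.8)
The plan is to reduce the general case to that of a product of two equal balls, using the linear covariance of the John ellipsoid (formula (\ref{JL1})), and then to identify the John ellipsoid of that product by a symmetrization argument based on its uniqueness.

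First I would apply the symplectic matrix $M_{A^{-1/2}}=\begin{pmatrix} A^{1/2} & 0\\ 0 & A^{-1/2}\end{pmatrix}$, which is the inverse of $M_{A^{1/2}}$, to the body $X\times X^{\hslash}$. Using that $A^{1/2}$ is symmetric, a direct substitution $u=A^{1/2}x$, $v=A^{-1/2}p$ turns the defining inequalities $Ax\cdot x\le\hbar$ and $A^{-1}p\cdot p\le\hbar$ into $|u|^{2}\le\hbar$ and $|v|^{2}\le\hbar$, so that $M_{A^{-1/2}}(X\times X^{\hslash})=B_{X}^{n}(\sqrt{\hbar})\times B_{P}^{n}(\sqrt{\hbar})$, the product of two balls of radius $\sqrt{\hbar}$. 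By the covariance property (\ref{JL1}) it then suffices to show that the John ellipsoid of this product of balls is $B^{2n}(\sqrt{\hbar})$; pulling back by $M_{A^{1/2}}=M_{A^{-1/2}}^{-1}$ yields (\ref{Johnma}).

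For the product of balls, write $K=B_{X}^{n}(\sqrt{\hbar})\times B_{P}^{n}(\sqrt{\hbar})$. The ball $B^{2n}(\sqrt{\hbar})=\{(u,v):|u|^{2}+|v|^{2}\le\hbar\}$ is contained in $K$, since $|u|^{2}\le|u|^{2}+|v|^{2}\le\hbar$ and likewise for $v$. To see that it is in fact the John ellipsoid, I would invoke uniqueness of the maximal-volume inscribed ellipsoid. The body $K$ is invariant under every transformation $(u,v)\mapsto(R_{1}u,R_{2}v)$ with $R_{1},R_{2}\in O(n)$, and under the swap $(u,v)\mapsto(v,u)$. By uniqueness, $K_{\mathrm{John}}$ must be invariant under all of these. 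Writing $K_{\mathrm{John}}=\{Qz\cdot z\le c\}$ with $Q\in\operatorname*{Sym}_{++}(2n,\mathbb{R})$ in $n\times n$ block form, invariance under $O(n)\times O(n)$ forces — by the fact that a matrix commuting with every rotation is a scalar multiple of the identity — the diagonal blocks to be scalar and the off-diagonal block to vanish, while the swap symmetry equates the two scalars; hence $K_{\mathrm{John}}$ is a centered ball $B^{2n}(\rho)$. Finally, the inclusion $B^{2n}(\rho)\subset K$ gives $\rho\le\sqrt{\hbar}$, while maximality of the volume gives $\rho\ge\sqrt{\hbar}$ because $B^{2n}(\sqrt{\hbar})$ is itself inscribed; therefore $\rho=\sqrt{\hbar}$.

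I expect the symmetrization step to be the main obstacle: everything hinges on the uniqueness of the John ellipsoid, which is what converts the geometric symmetries of $K$ into rigid algebraic constraints on $Q$, together with the elementary but essential observation that the only symmetric matrices fixed by conjugation under all of $O(n)$ are the scalar multiples of the identity. By contrast, the reduction in the first step and the volume comparison at the end are routine computations.
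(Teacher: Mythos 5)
Your proposal is correct and follows essentially the same route as the paper: reduce to the product of balls $B_{X}^{n}(\sqrt{\hbar})\times B_{P}^{n}(\sqrt{\hbar})$ via the map $M_{A^{1/2}}$ and the covariance property (\ref{JL1}), then use uniqueness of the John ellipsoid together with the symmetries of the product to force it to be a centered ball, and finish with the volume/inclusion comparison. The only (immaterial) difference is that you exploit the full $O(n)\times O(n)$ invariance plus the swap, whereas the paper uses the swap, a partial reflection, and diagonal rotations $(x,p)\mapsto(Hx,Hp)$ to reach the same algebraic constraints.
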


\begin{proof}
That $S_{A^{1/2}}\in\operatorname*{Sp}(n)$ is clear. Let $B_{X}^{n}%
(\sqrt{\hbar})$ and $B_{P}^{n}(\sqrt{\hbar})$ be the balls with radius
$\sqrt{\hbar}$ in in $\mathbb{R}_{x}^{n}$ and $\mathbb{R}_{p}^{n}$,
respectively. We have, by (\ref{Xell}), (\ref{Xelldual}), and (\ref{JL1}),
\begin{align*}
(X\times X^{\hslash})_{\mathrm{John}} &  =(A^{-1/2}B_{X}^{n}(\sqrt{\hbar
})\times A^{1/2}B_{P}^{n}(\sqrt{\hbar}))_{\mathrm{John}}\\
&  =M_{A^{1/2}}(B_{X}^{n}(\sqrt{\hbar})\times B_{P}^{n}(\sqrt{\hbar
}))_{\mathrm{John}}%
\end{align*}
Let us show that
\[
(B_{X}^{n}(\sqrt{\hbar})\times B_{P}^{n}(\sqrt{\hbar})))_{\mathrm{John}%
}=B^{2n}(\sqrt{\hbar});
\]
this will prove our assertion. The inclusion$\ B^{2n}(\sqrt{\hbar})\subset
B_{X}^{n}(\sqrt{\hbar})\times B_{P}^{n}(\sqrt{\hbar})$ is obvious, and we
cannot have $B^{2n}(R)\subset B_{X}^{n}(\sqrt{\hbar})\times B_{P}^{n}%
(\sqrt{\hbar})$ if $R>1$. Assume now that the John ellipsoid $\Omega
_{\mathrm{John}}$ of $\Omega=B_{X}^{n}(\sqrt{\hbar})\times B_{P}^{n}%
(\sqrt{\hbar})$ is defined by
\[
Ax\cdot x+Bx\cdot p+Cp\cdot p\leq\hbar
\]
where $A,C\in\operatorname*{Sym}_{++}(n,\mathbb{R})$ and $B$ are real $n\times
n$ matrices. Since $\Omega$ is invariant by the transformation
$(x,p)\longmapsto(p,x)$ so is $\Omega_{\mathrm{John}}$ and we must thus have
$A=C$ and $B=B^{T}$. Similarly, $\Omega$ being invariant by the partial
reflection $(x,p)\longmapsto(-x,p)$ we get $B=0$ so $\Omega_{\mathrm{John}}$
is defined by $Ax\cdot x+Ap\cdot p\leq\hbar$. The last step is to observe that
$\Omega$, and hence $\Omega_{\mathrm{John}}$, are invariant under all
symplectic rotations $(x,p)\longmapsto(Hx,HP)$ where $H\in O(n,\mathbb{R})$ so
we must have $AH=HA$ for all $H\in O(n,\mathbb{R})$, but this is only possible
if $A=\lambda I_{n\times n}$ for some $\lambda\in\mathbb{R}$. The John
ellipsoid of $\Omega$ is thus of the type $B^{2n}(\sqrt{\hbar/\lambda})$ for
some $\lambda\geq1$ and this concludes the proof in view of the inclusion
$B^{2n}(\sqrt{\hbar})\subset B_{X}^{n}(\sqrt{\hbar})\times B_{P}^{n}%
(\sqrt{\hbar})$ since we cannot have $\lambda>1$.
\end{proof}

\begin{remark}
\label{Rem1}The John ellipsoid $(X\times X^{\hslash})_{\mathrm{John}}$ is the
set of all $(x,p)\in\mathbb{R}_{z}^{2n}$ such that $Ax\cdot x+A^{-1}p\cdot
p\leq\hbar$. The orthogonal projections of $(X\times X^{\hslash}%
)_{\mathrm{John}}$ on the coordinate planes $\ell_{X}=\mathbb{R}_{x}^{n}%
\times0$ and. $\ell_{P}=0\times\mathbb{R}_{p}^{n}$ are therefore $\Pi
_{X}(X\times X^{\hslash})_{\mathrm{John}}=X$ and $\Pi_{P}(X\times X^{\hslash
})_{\mathrm{John}}=X^{\hslash}$.
\end{remark}

The construction above shows that we have a canonical identification between
the ellipsoids $X=\{x:Ax\cdot x\leq\hbar\}$ and the squeezed coherent states
\begin{equation}
\phi_{A}(x)=(\pi\hbar)^{-n/4}(\det A)^{1/4}e^{-Ax\cdot x/2\hbar}. \label{coh1}%
\end{equation}
In fact, the covariance ellipsoid \cite{Littlejohn,Birk} of $\phi_{A}$ is
precisely the John ellipsoid of the product $X\times X^{\hslash}$ as can be
seen calculating the Wigner transform of $\phi_{A}$
\begin{equation}
W\phi_{A}(z)=(\pi\hbar)^{-n}(\det A)^{1/4}\exp\left[  -\frac{1}{\hbar}(Ax\cdot
x+A^{-1}p\cdot p)\right]  \label{wfa}%
\end{equation}
which corresponds to the canonical bijection%
\[
X\longmapsto(X\times X^{\hslash})_{\mathrm{John}}%
\]
between (centered) configuration space ellipsoids $X$ and John ellipsoids of
$X\times X^{\hslash}$ (we will have more to say about this correspondence in
the forthcoming sections).

\subsection{Symplectic polar duality}

Let $\Omega$ be a symmetric convex body in the phase space $(\mathbb{R}%
^{2n},\omega)$. We define the symplectic polar dual $\Omega^{\hbar,\omega}$ of
$\Omega$ as the set
\begin{equation}
\Omega^{\hbar,\omega}=\{z^{\prime}\in\mathbb{R}^{2n}:\sup\nolimits_{z\in
\Omega}\omega(z,z^{\prime})\leq\hbar\}. \label{omegapol1bis}%
\end{equation}
It is straightforward to verify that $\Omega^{\hbar,\omega}$ is related to the
ordinary polar dual $\Omega^{\hbar}$ (calculated by identifying $\mathbb{R}%
^{2n}$ with its own dual) by the formula
\begin{equation}
\Omega^{\hbar,\omega}=(J\Omega)^{\hbar}=J(\Omega^{\hbar}). \label{omegapol2}%
\end{equation}
The properties of symplectic polar duality are easily deduced from those of
ordinary polar duality. This notion is particularly interesting because it
enjoys a property of \textquotedblleft symplectic covariance\textquotedblright:

\begin{proposition}
Let $S\in\operatorname*{Sp}(n)$ and $\Omega$ a symmetric convex body. (i) We
have
\begin{equation}
(S\Omega)^{\hbar,\omega}=S(\Omega^{\hbar,\omega}) \label{omegapol3}%
\end{equation}
\textit{(ii) }The quantum blobs $S(B^{2n}(\sqrt{\hbar}))$, $S\in
\operatorname*{Sp}(n)$, are the only fixed points of the transformation
$\Omega\longmapsto\Omega^{\hbar,\omega}$.
\end{proposition}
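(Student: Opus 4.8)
The plan is to handle the two parts separately, deriving (i) directly from the symplectic invariance of $\omega$ and then using (i) to reduce the fixed-point analysis in (ii) to a single model ball. For part (i) I would not even invoke the formula $\Omega^{\hbar,\omega}=J(\Omega^{\hbar})$: since $S\in\operatorname*{Sp}(n)$ is a linear bijection preserving $\omega$, I can write any $z'$ as $z'=Sz''$ and compute
\[
\sup_{w\in S\Omega}\omega(w,Sz'')=\sup_{z\in\Omega}\omega(Sz,Sz'')=\sup_{z\in\Omega}\omega(z,z''),
\]
the last equality being exactly $\omega(Sz,Sz'')=\omega(z,z'')$. Hence $Sz''\in(S\Omega)^{\hbar,\omega}$ if and only if $z''\in\Omega^{\hbar,\omega}$, which is precisely $(S\Omega)^{\hbar,\omega}=S(\Omega^{\hbar,\omega})$. (Equivalently one combines $\Omega^{\hbar,\omega}=J(\Omega^{\hbar})$ with the scaling law $(S\Omega)^{\hbar}=(S^{T})^{-1}\Omega^{\hbar}$ and the symplectic identity $J(S^{T})^{-1}=SJ$.)

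For the first half of (ii) — that every quantum blob is a fixed point — I would use (i) to reduce to the ball. Since a quantum blob is $S(B^{2n}(\sqrt{\hbar}))$, part (i) gives $(S(B^{2n}(\sqrt{\hbar})))^{\hbar,\omega}=S((B^{2n}(\sqrt{\hbar}))^{\hbar,\omega})$, so it suffices to verify $(B^{2n}(\sqrt{\hbar}))^{\hbar,\omega}=B^{2n}(\sqrt{\hbar})$. This is immediate from $\Omega^{\hbar,\omega}=J(\Omega^{\hbar})$: the Euclidean ball is self-dual under ordinary $\hbar$-polarity (the case $A=I$ of \eqref{Xell}--\eqref{Xelldual} in dimension $2n$), and $J$ is orthogonal, so $J(B^{2n}(\sqrt{\hbar}))=B^{2n}(\sqrt{\hbar})$.

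For the converse, the heart of the matter, I would first record that $\Omega=\Omega^{\hbar,\omega}=J(\Omega^{\hbar})$ is equivalent to $\Omega^{\hbar}=-J\Omega$. For an ellipsoid $\Omega=\{z:Gz\cdot z\leq\hbar\}$ with $G\in\operatorname*{Sym}_{++}(2n,\mathbb{R})$ one has $\Omega^{\hbar}=\{z:G^{-1}z\cdot z\leq\hbar\}$ and $-J\Omega=\{z:(J^{T}GJ)z\cdot z\leq\hbar\}$, so the fixed-point condition reads $G^{-1}=J^{T}GJ=-JGJ$; multiplying out and using $J^{2}=-I$ this collapses to the single clean identity $GJG=J$. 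Because $G=G^{T}$, this says exactly that $G$ is a positive-definite symmetric \emph{symplectic} matrix. I would then use that the principal square root of such a $G$ is again symplectic: from $GJG=J$ one gets $JGJ^{-1}=G^{-1}$, whence $JG^{1/2}J^{-1}=G^{-1/2}$ and therefore $G^{1/2}JG^{1/2}=J$. Thus $S:=G^{-1/2}\in\operatorname*{Sp}(n)$ is symmetric, and $\Omega=\{(S^{-1})^{T}S^{-1}z\cdot z\leq\hbar\}=S(B^{2n}(\sqrt{\hbar}))$ is a quantum blob.

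The hard part will be justifying that a fixed point \emph{must} be an ellipsoid, i.e. passing from general symmetric convex bodies to the ellipsoidal case above; this is where I expect the real obstacle to lie. A natural attempt is to compare inscribed and circumscribed ellipsoids: writing $E$ for the John ellipsoid and $F$ for the L\"owner ellipsoid of $\Omega$, John/L\"owner polarity gives $(\Omega^{\hbar})_{\mathrm{John}}=F^{\hbar}$, while equivariance of the John ellipsoid under the linear map $-J$ gives $(\Omega^{\hbar})_{\mathrm{John}}=-JE$; with $\Omega^{\hbar}=-J\Omega$ this forces the rigid relation $F=-JE^{\hbar}$ between the two extremal ellipsoids, and one would try to squeeze $\Omega$ onto a single ellipsoid from there. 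I would caution, however, that self-duality \emph{up to a linear map} does not on its own force an ellipsoid: in the plane a suitably scaled regular hexagon satisfies $\Omega^{\hbar}=-J\Omega$, because $-J$ is rotation by $90^{\circ}$ and the $30^{\circ}$ rotation produced by polarity agrees with it modulo the $60^{\circ}$ symmetry of the hexagon. Consequently the argument genuinely needs either a restriction to ellipsoids or an extra input (e.g. a volume/Blaschke--Santal\'o extremality hypothesis) beyond pure polar self-duality, and I would expect the clean statement to be the ellipsoidal one established in the previous paragraph.
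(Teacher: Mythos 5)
Your part (i) is correct and slightly more self-contained than the paper's argument: the paper obtains \eqref{omegapol3} by combining $\Omega^{\hbar,\omega}=J(\Omega^{\hbar})$ with the scaling law \eqref{scaling} and the identity $JS=(S^{T})^{-1}J$, which is exactly your parenthetical alternative; your direct computation from $\omega(Sz,Sz'')=\omega(z,z'')$ proves the same identity without passing through ordinary polarity. For part (ii) the paper's proof consists \emph{only} of the observation \eqref{fixblob} that quantum blobs are fixed points (your forward direction, obtained the same way); the ``only'' half is not argued in this proposition at all, and is established just for ellipsoids later, in Proposition \ref{Prop3}(ii), via Williamson diagonalization \eqref{Williamson}. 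Your alternative for the ellipsoidal case --- reducing the fixed-point condition to $GJG=J$ and invoking that the principal square root of a positive definite symmetric symplectic matrix is again symplectic --- is correct and arguably cleaner than the diagonalization route.

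More importantly, your closing caution is not a gap in your argument; it is a genuine counterexample to the statement as written. For $n=1$ take the regular hexagon $\Omega$ centered at the origin with vertices at angles $k\pi/3$, $k=0,\dots,5$, and circumradius $r$ with $r^{2}=2\hbar/\sqrt{3}$. Then $\Omega^{\hbar}$ is the regular hexagon with the same circumradius and vertices at angles $\pi/6+k\pi/3$, i.e. $\Omega^{\hbar}=R_{\pi/6}\Omega$; since $R_{\pi/3}\Omega=\Omega$ this equals $R_{\pi/2}\Omega=-J\Omega$, whence $\Omega^{\hbar,\omega}=J(\Omega^{\hbar})=-J^{2}\Omega=\Omega$. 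This hexagon is a centrally symmetric convex body, a fixed point of $\Omega\longmapsto\Omega^{\hbar,\omega}$, and not a quantum blob, so the uniqueness claim in (ii) fails for general symmetric convex bodies and must be restricted to ellipsoids (as the paper implicitly does in Proposition \ref{Prop3}). With that restriction your proof is complete, and the John/L\"{o}wner detour you sketched is not needed.
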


\begin{proof}
(i) The condition $S\in\operatorname*{Sp}(n)$ is equivalent to $S^{T}JS=J$
hence $JS=(S^{T})^{-1}J$. Now, using the scaling property (\ref{scaling}) and
the equality (\ref{omegapol2}) we get
\begin{align*}
(S\Omega)^{\hbar,\omega}  &  =J(S(\Omega))^{\hbar}=J(S^{T})^{-1}(\Omega
^{\hbar})\\
&  =SJ(\Omega^{\hbar})=S(\Omega^{\hbar,\omega})
\end{align*}
which is (\ref{omegapol3}). (ii) In particular, since $B^{2n}(\sqrt{\hbar
})^{\hbar}=B^{2n}(\sqrt{\hbar})$ we have
\begin{equation}
(S(B^{2n}(\sqrt{\hbar})))^{\hbar,\omega}=S(B^{2n}(\sqrt{\hbar})).
\label{fixblob}%
\end{equation}

\end{proof}

Let us introduce some terminology. Let $M\in\operatorname*{Sym}_{++}%
(2n,\mathbb{R})$ and consider the centered phase space ellipsoid ellipsoid
\begin{equation}
\Omega=\{z\in\mathbb{R}^{2n}:Mz\cdot z\leq\hbar\}.
\end{equation}
Setting $M=\frac{1}{2}\hbar\Sigma^{-1}$ we can visualize $\Omega$ as the
covariance matrix of a (classical or quantum) state:%
\begin{equation}
\Omega=\{z\in\mathbb{R}^{2n}:\tfrac{1}{2}\Sigma^{-1}z\cdot z\leq
1\}.\label{covellipse}%
\end{equation}
We will say that $\Omega$ is \textit{quantized} if it contains a quantum blob,
\textit{i.e.} if there exists $S\in\operatorname*{Sp}(n)$ such that
$S(B^{2n}(\sqrt{\hbar}))\subset\Omega$. This condition is equivalent to the
uncertainty principle in its strong Robertson--Schr\"{o}dinger form when
$\Sigma$ is viewed as the covariance matrix of a quantum state
\cite{go09,Birk,goluPR}.

Before we proceed to prove the main results we recall the following symplectic
diagonalization result (\textquotedblleft Williamson
diagonalization\textquotedblright\ \cite{Birk}). For every $M\in
\operatorname*{Sym}_{++}(2n,\mathbb{R})$ there exists $S_{0}\in
\operatorname*{Sp}(n)$ such that%
\begin{equation}
M=S_{0}^{T}DS_{0}\text{ \ , \ }D=%
\begin{pmatrix}
\Lambda^{\omega} & 0_{n\times n}\\
0_{n\times n} & \Lambda^{\omega}%
\end{pmatrix}
\label{Williamson}%
\end{equation}
where $\Lambda^{\omega}=\operatorname*{diag}(\lambda_{1}^{\omega}%
,...,\lambda_{n}^{\omega})$; here $\lambda_{1}^{\omega},...,\lambda
_{n}^{\omega}$ the symplectic eigenvalues of $M$ (\textit{i.e.} the moduli of
the usual eigenvalues of the matrix $JM$; they are the same as those of the
antisymmetric matrix $M^{1/2}JM^{1/2}$ and hence of the type $\pm i\lambda$,
$\lambda>0$).

\begin{proposition}
\label{Prop3}Let $\Omega$ be a non-degenerate phase space ellipsoid. (i)
$\Omega$ is quantized if and only if $\Omega^{\hbar,\omega}\subset\Omega$
(i.e. if and only if $\Omega$ contains a quantum blob $S(B^{2n}(\sqrt{\hbar
}))$). (ii) The equality $\Omega^{\hbar,\omega}=\Omega$ holds if and only if
there exists $S\in\operatorname*{Sp}(n)$ such that $\Omega=S(B^{2n}%
(\sqrt{\hbar}))$ (i.e. if and only if $\Omega$ is a quantum blob).
\end{proposition}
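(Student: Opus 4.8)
The plan is to compute the symplectic polar dual $\Omega^{\hbar,\omega}$ explicitly and then reduce everything, via the symplectic covariance (\ref{omegapol3}), to the case where the defining matrix is in Williamson normal form (\ref{Williamson}). Writing $\Omega=\{z:Mz\cdot z\leq\hbar\}$, formula (\ref{omegapol2}) together with the ellipsoid form of the scaling property gives $\Omega^{\hbar,\omega}=\{z:(-JM^{-1}J)z\cdot z\leq\hbar\}$, and one checks that $-JM^{-1}J=J^{T}M^{-1}J$ is again symmetric positive definite. If $M=S_{0}^{T}DS_{0}$ is the Williamson factorization, then $\Omega=S_{0}^{-1}\Omega_{D}$ with $\Omega_{D}=\{z:Dz\cdot z\leq\hbar\}$, and a direct block computation shows $-JD^{-1}J=D^{-1}$, so that $\Omega_{D}^{\hbar,\omega}=\{z:D^{-1}z\cdot z\leq\hbar\}$. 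Thus in the diagonal picture the statement reduces to comparing the two axis-aligned ellipsoids governed by $D$ and $D^{-1}$, i.e.\ to the symplectic eigenvalues $\lambda_{1}^{\omega},\dots,\lambda_{n}^{\omega}$. I will use throughout that both "being quantized" and the inclusion $\Omega^{\hbar,\omega}\subset\Omega$ are invariant under $\Omega\mapsto S_{0}\Omega$: the former because a symplectic map sends quantum blobs to quantum blobs, the latter by (\ref{omegapol3}) together with the bijectivity of $S_{0}$.

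For part (i), the forward implication is cleanest done abstractly. If $\Omega$ contains a quantum blob, say $S(B^{2n}(\sqrt{\hbar}))\subset\Omega$, then applying the symplectic polar dual --- which is anti-monotone, being the composition of the anti-monotone ordinary dual with the bijection $J$ --- reverses the inclusion, and by the fixed-point identity (\ref{fixblob}) we obtain $\Omega^{\hbar,\omega}\subset(S(B^{2n}(\sqrt{\hbar})))^{\hbar,\omega}=S(B^{2n}(\sqrt{\hbar}))\subset\Omega$. For the converse I reduce to $M=D$; the inclusion $\Omega_{D}^{\hbar,\omega}\subset\Omega_{D}$ then reads $\{D^{-1}z\cdot z\leq\hbar\}\subset\{Dz\cdot z\leq\hbar\}$, which for positive definite symmetric matrices is equivalent to the Loewner inequality $D\preceq D^{-1}$, i.e.\ $\lambda_{j}^{\omega}\leq1$ for every $j$. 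Since $\lambda_{j}^{\omega}\leq1$ forces $Dz\cdot z\leq|z|^{2}$, we get $B^{2n}(\sqrt{\hbar})\subset\Omega_{D}$, so $S_{0}^{-1}(B^{2n}(\sqrt{\hbar}))$ is a quantum blob contained in $\Omega$.

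For part (ii), the implication $\Omega=S(B^{2n}(\sqrt{\hbar}))\Rightarrow\Omega^{\hbar,\omega}=\Omega$ is exactly the fixed-point property (\ref{fixblob}) already recorded. Conversely, reducing to Williamson form, the equality $\Omega_{D}^{\hbar,\omega}=\Omega_{D}$ becomes $\{D^{-1}z\cdot z\leq\hbar\}=\{Dz\cdot z\leq\hbar\}$; equality of two centered ellipsoids defined by positive definite matrices forces the matrices to coincide, so $D^{-1}=D$, hence $(\lambda_{j}^{\omega})^{2}=1$ and $\lambda_{j}^{\omega}=1$ for all $j$. Therefore $D=I_{2n}$, $\Omega_{D}=B^{2n}(\sqrt{\hbar})$, and $\Omega=S_{0}^{-1}(B^{2n}(\sqrt{\hbar}))$ is a quantum blob.

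I expect the only delicate points to be bookkeeping rather than conceptual. The technical heart is the explicit evaluation $\Omega_{D}^{\hbar,\omega}=\{D^{-1}z\cdot z\leq\hbar\}$ and the careful verification that both properties survive the Williamson reduction; once these are in place the comparison of the diagonal ellipsoids, and hence the whole equivalence, is routine and avoids any appeal to symplectic capacities. It is worth noting as a consistency check that the resulting criterion $\lambda_{j}^{\omega}\leq1$ is precisely the symplectic-eigenvalue form of the Robertson--Schr\"{o}dinger inequality for the covariance matrix $\Sigma=\tfrac{1}{2}\hbar M^{-1}$, in agreement with the physical interpretation recorded after (\ref{covellipse}).
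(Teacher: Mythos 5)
Your proof is correct and follows essentially the same route as the paper: reduce to Williamson normal form via the covariance property $(S\Omega)^{\hbar,\omega}=S(\Omega^{\hbar,\omega})$, compute $\Omega_{D}^{\hbar,\omega}=\{z:D^{-1}z\cdot z\leq\hbar\}$ using $-JD^{-1}J=D^{-1}$, and compare $D$ with $D^{-1}$ in the L\"{o}wner order to read off $\lambda_{j}^{\omega}\leq1$ (resp.\ $=1$). The only cosmetic difference is in part (ii), where you rerun the normal-form computation to force $D=D^{-1}=I_{2n}$, whereas the paper sandwiches $\Omega=\Omega^{\hbar,\omega}\subset Q^{\hbar,\omega}=Q\subset\Omega$ using part (i) and anti-monotonicity; both arguments are equally valid.
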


\begin{proof}
(i) Suppose that there exists $S\in\operatorname*{Sp}(n)$ such that
$Q=S(B^{2n}(\sqrt{\hbar}))\subset\Omega$. By the anti-monotonicity of
(symplectic) polar duality this implies that we have $\Omega^{\hbar,\omega
}\subset Q^{\hbar,\omega}=Q\subset\Omega$, which proves the necessity of the
condition. Suppose conversely that we have $\Omega^{\hbar,\omega}\subset
\Omega$. Then
\begin{equation}
\Omega^{\hbar,\omega}=\{z\in\mathbb{R}^{2n}:(-JMJ)z\cdot z\leq\hbar\}
\label{ommom}%
\end{equation}
hence the inclusion $\Omega^{\hbar,\omega}\subset\Omega$ implies that
$M\leq(-JMJ)$ ($\leq$ stands here for the L\"{o}wner ordering). Performing a
symplectic diagonalization (\ref{Williamson}) of $M$ and using the relations
$JS^{-1}=S^{T}J$, $(S^{T})^{-1}J=JS$ this is equivalent to
\[
M=S^{T}DS\leq S^{T}(-JD^{-1}J)S
\]
that is to $D\leq-JD^{-1}J$. In the notation in (\ref{Williamson}) this
implies that we have $\Lambda^{\omega}\leq(\Lambda^{\omega})^{-1}$ and hence
$\lambda_{j}^{\omega}\leq1$ for $1\leq j\leq n$; thus $D\leq I$ and
$M=S^{T}DS\leq S^{T}S$. The inclusion $S(B^{2n}(\sqrt{\hbar}))\subset\Omega$
follows. (ii) The condition is sufficient since $S(B^{2n}(\sqrt{\hbar
}))^{\hbar,\omega}=S(B^{2n}(\sqrt{\hbar}))$. Assume conversely that
$\Omega^{\hbar,\omega}=\Omega$. Then there exists $S\in\operatorname*{Sp}(n)$
such that$Q=S(B^{2n}(\sqrt{\hbar}))\subset\Omega$. It follows that
$\Omega^{\hbar,\omega}\subset Q^{\hbar,\omega}=Q$ hence $\Omega^{\hbar,\omega
}=\Omega\subset Q$ so we must have $\Omega=Q$.
\end{proof}

We are going to prove a stronger statement, which can be seen as a
\textquotedblleft tomographic\textquotedblright\ result since it involves the
intersection of the covariance ellipsoid with a subspace. Recall \cite{Birk}
that a subspace $\ell$ of the symplectic space $(\mathbb{R}^{2n},\omega)$ is a
\textit{Lagrangian plane} if $\dim\ell=n$ and $\omega(z,z^{\prime})=0$ for all
$(z,z^{\prime})\in\ell\times\ell$. The coordinate spaces $\ell_{X}%
=\mathbb{R}_{x}^{n}\times0$ and $\ell_{P}=0\times\mathbb{R}_{p}^{n}$ are
trivially Lagrangian planes. The set of all Lagrangian planes is denoted by
$\operatorname*{Lag}(n)$ and is called the Lagrangian Grassmannian; it can be
equipped with a topology making it diffeomorphic to the homogeneous space
$U(n,\mathbb{C})/O(n,\mathbb{R})$. The symplectic group $\operatorname*{Sp}%
(n)$ acts transitively on $\operatorname*{Lag}(n)$; in particular for every
$S\in\operatorname*{Sp}(n)$ the subspaces $\ell=S\ell_{X}$ and $\ell=S\ell
_{P}$ are Lagrangian planes.

\begin{theorem}
\label{Thm3}(i) The ellipsoid $\Omega$ contains a quantum blob $Q=S(B^{2n}%
(\sqrt{\hbar}))$ ($S\in\operatorname*{Sp}(n)$) if and only if there exists
$\ell\in\operatorname*{Lag}(n)$ such that%
\begin{equation}
\Omega^{\hbar,\omega}\cap\ell\subset\Omega\cap\ell\label{interl}%
\end{equation}
in which case we have $\Omega^{\hbar,\omega}\cap\ell\subset\Omega\cap\ell$ for
all $\ell\in\operatorname*{Lag}(n)$. (ii) The equality $\Omega^{\hbar,\omega
}\cap\ell=\Omega\cap\ell$ holds if and only if $\Omega$ is a quantum blob.
\end{theorem}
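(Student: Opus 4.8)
The plan is to derive both parts from Proposition~\ref{Prop3}, using the symplectic covariance (\ref{omegapol3}) of the dual to bring an arbitrary Lagrangian plane into the standard position $\ell_{X}$, and Williamson diagonalization (\ref{Williamson}) to read off the symplectic spectrum. The necessity in (i) together with the final ``for all $\ell$'' clause is the easy half, and I would dispatch it first: if $\Omega$ contains a quantum blob then Proposition~\ref{Prop3}(i) already gives the global inclusion $\Omega^{\hbar,\omega}\subset\Omega$, and intersecting with any $\ell\in\operatorname*{Lag}(n)$ yields $\Omega^{\hbar,\omega}\cap\ell\subset\Omega\cap\ell$ for every $\ell$. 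The whole weight of the theorem therefore sits in the converse.

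For the converse I would first normalize the plane. Since $\operatorname*{Sp}(n)$ acts transitively on $\operatorname*{Lag}(n)$, write $\ell=S\ell_{X}$ and apply $S^{-1}$; by (\ref{omegapol3}) and the symplectic invariance of quantization the hypothesis becomes, for $\Omega^{\prime}=S^{-1}\Omega=\{M^{\prime}z\cdot z\le\hbar\}$ with $M^{\prime}=\left(\begin{smallmatrix}A & B\\ B^{T} & C\end{smallmatrix}\right)$, the slice inclusion $(\Omega^{\prime})^{\hbar,\omega}\cap\ell_{X}\subset\Omega^{\prime}\cap\ell_{X}$, and it then suffices to show $\Omega^{\prime}$ is quantized. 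Computing the matrix of $(\Omega^{\prime})^{\hbar,\omega}$ as in the proof of Proposition~\ref{Prop3}, the intersection with $\ell_{X}$ isolates a single $n\times n$ block, so the slice inclusion is equivalent to a L\"{o}wner inequality $A\le(M^{\prime-1})_{PP}$ between blocks; equivalently the polar dual of the position slice is contained in the momentum projection, $(\Omega^{\prime}\cap\ell_{X})^{\hbar}\subset\Pi_{P}(\Omega^{\prime})$. I would then feed a Williamson diagonalization of $M^{\prime}$ into this block inequality with the aim of concluding $\lambda_{j}^{\omega}\le1$ for all $j$, which by Proposition~\ref{Prop3}(i) is precisely quantization.

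The main obstacle is exactly this last passage, and it is the genuine difference between a ``local'' single-plane statement and a ``global'' one: a single Lagrangian slice constrains only one block of $M^{\prime}$, whereas $\lambda_{\max}^{\omega}$ is an invariant of the whole matrix, so the block inequality on one plane does not by itself transparently bound the symplectic spectrum. The clean way to force the propagation is to prove the ``for all $\ell$'' form symmetrically: slice inclusion on every Lagrangian plane says that the quadratic form $(-JM^{-1}J)-M$ is nonnegative on each $\ell$, and since every nonzero vector spans an isotropic line and hence lies in some Lagrangian plane, nonnegativity on all Lagrangian planes upgrades to nonnegativity on all of $\mathbb{R}^{2n}$, that is to $\Omega^{\hbar,\omega}\subset\Omega$, whence Proposition~\ref{Prop3} closes the argument. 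The delicate point I would have to secure is whether one suitably chosen plane already suffices, as the statement asserts, or whether the full family is really needed; this is the crux on which the strengthening over Proposition~\ref{Prop3} stands.

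Part (ii) I expect to follow formally from (i) by reflexivity $(\Omega^{\hbar,\omega})^{\hbar,\omega}=\Omega$. The equality on a plane contains two inclusions: $\Omega^{\hbar,\omega}\cap\ell\subset\Omega\cap\ell$ gives, by (i), that $\Omega$ is quantized, hence $\lambda_{j}^{\omega}\le1$ for all $j$; applying (i) instead to $\Omega^{\hbar,\omega}$, whose symplectic eigenvalues are the $1/\lambda_{j}^{\omega}$ and whose symplectic dual is again $\Omega$, the reverse inclusion $\Omega\cap\ell\subset\Omega^{\hbar,\omega}\cap\ell$ gives $1/\lambda_{j}^{\omega}\le1$. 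Hence $\lambda_{j}^{\omega}=1$ for every $j$, which by Proposition~\ref{Prop3}(ii) means that $\Omega=S(B^{2n}(\sqrt{\hbar}))$ is a quantum blob.
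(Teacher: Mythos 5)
Your easy direction and your reduction of the converse to Proposition \ref{Prop3} via symplectic covariance and Williamson diagonalization follow the same route as the paper, and your proof of the ``for all $\ell$'' variant is correct and arguably cleaner than the paper's: since every nonzero $z$ spans an isotropic line and hence lies in some Lagrangian plane, nonnegativity of the form $(-JM^{-1}J)-M$ on every Lagrangian plane does upgrade to $\Omega^{\hbar,\omega}\subset\Omega$, and Proposition \ref{Prop3} finishes. But the point you flag as the one you ``would have to secure'' is a genuine gap, and it cannot be closed: for $n\geq2$ a single arbitrary Lagrangian plane does \emph{not} suffice. Take $n=2$ and $M=\operatorname{diag}(2,\tfrac12,2,\tfrac12)$ in the coordinates $(x_{1},x_{2},p_{1},p_{2})$; the symplectic eigenvalues of $M$ are $2$ and $\tfrac12$, so $\Omega$ is not quantized. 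Here $-JM^{-1}J=\operatorname{diag}(\tfrac12,2,\tfrac12,2)$, and on the Lagrangian plane $\ell$ spanned by $u=\tfrac{1}{\sqrt{2}}(1,0,0,1)$ and $v=\tfrac{1}{\sqrt{2}}(0,1,1,0)$ (one checks $\omega(u,v)=0$) both quadratic forms restrict to $\tfrac54(s^{2}+t^{2})$, so that
\[
\Omega^{\hbar,\omega}\cap\ell=\Omega\cap\ell
\]
while $\Omega$ is neither quantized nor a quantum blob. Hence the existence of one Lagrangian plane with the slice inclusion (or even equality) does not imply quantization, and both your part (i) and your part (ii) arguments founder on this example.

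What does survive is the statement for an \emph{adapted} plane, and this is in fact all the paper itself proves: its sufficiency argument takes $\ell=S^{-1}\ell_{X}$ with $S$ a Williamson diagonalizer of $M$, for which the two restricted forms are $\Lambda^{\omega}$ and $(\Lambda^{\omega})^{-1}$ and the slice inclusion really does force $\lambda_{j}^{\omega}\leq1$ for every $j$. Your instinct that ``the full family is really needed'' for an unrestricted $\ell$ is therefore correct; the defensible versions of the theorem are either the ``for all $\ell\in\operatorname*{Lag}(n)$'' form, which your isotropic-line argument establishes, or the ``there exists a Williamson-adapted $\ell$'' form, which is what the paper's computation gives. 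Your derivation of (ii) from (i) by reflexivity of $\Omega\longmapsto\Omega^{\hbar,\omega}$ and inversion of the symplectic spectrum is sound in itself and would go through verbatim once (i) is restated in one of these corrected forms.
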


\begin{proof}
(i) The necessity of the condition (\ref{interl}) is trivial (Proposition
\ref{Prop3}). Let us prove that the condition is sufficient. Setting
$M=\frac{\hbar}{2}\Sigma^{-1}$ and
\begin{equation}
\Omega_{\Sigma}=\{z:Mz\cdot z\leq\hbar\}=\{z:\frac{1}{2}\Sigma^{-1}z\cdot
z\leq\hbar\} \label{omegas}%
\end{equation}
we have%
\begin{equation}
\Omega_{\Sigma}^{\hbar,\omega}=\{z\in\mathbb{R}^{2n}:(-JM^{-1}J)z\cdot
z\leq\hbar\}. \label{omegasdual}%
\end{equation}
We now perform a symplectic diagonalization (\ref{Williamson}) of $M$, this
leads to%
\begin{equation}
\Omega_{\Sigma}=S^{-1}\Omega_{\hbar D^{-1}/2}\text{ \ },\text{ \ }%
\Omega_{\Sigma}^{\hbar,\omega}=S^{-1}(\Omega_{\hbar D^{-1}/2})^{\hbar,\omega}
\label{oms}%
\end{equation}
where $\Omega_{\hbar D^{-1}/2}$ and its dual are explicitly given by
\begin{align*}
\Omega_{\hbar D^{-1}/2}  &  =\{z\in\mathbb{R}^{2n}:Dz\cdot z\leq\hbar\}\\
(\Omega_{\hbar D^{-1}/2})^{\hbar,\omega}  &  =\{z\in\mathbb{R}^{2n}%
:-JD^{-1}Jz\cdot z\leq\hbar\}\\
&  =\{z\in\mathbb{R}^{2n}:D^{-1}z\cdot z\leq\hbar\}.
\end{align*}
Let us first assume that $\ell=\ell_{X}=\mathbb{R}^{n}\times0$. Then
\[
\Omega_{\hbar D^{-1}/2}\cap\ell_{X}=\{x\in\mathbb{R}^{n}:\Lambda^{\omega
}x\cdot x\leq\hbar\}
\]
and
\[
(\Omega_{\hbar D^{-1}/2})^{\hbar,\omega}\cap\ell_{X}=\{x\in\mathbb{R}%
^{n}:(\Lambda^{\omega})^{-1}x\cdot x\leq\hbar\}.
\]
Now, the condition
\[
(\Omega_{\hbar D^{-1}/2})^{\hbar,\omega}\cap\ell_{X}\subset\Omega_{\hbar
D^{-1}/2}\cap\ell_{X}%
\]
is equivalent to $(\Lambda^{\omega})^{-1}\geq\Lambda^{\omega}$ that is to
$D^{-1}\geq D$, which implies $(\Omega_{\hbar D^{-1}/2})^{\hbar,\omega}%
\subset\Omega_{\hbar D^{-1}/2}$, and $\Omega_{\hbar D^{-1}/2}$ contains a
quantum blob in view of Proposition \ref{Prop3}.. We have thus proven the
theorem in the case where $\Sigma=\hbar D^{-1}/2$ and $\ell=\ell_{X}$. For the
general case we take $\ell=S^{-1}\ell_{X}$ where $S$ is a diagonalizing
matrix; in view of (\ref{oms}) we have
\begin{align*}
\Omega_{\Sigma}\cap\ell &  =S^{-1}\Omega_{\hbar D^{-1}/2}\cap S^{-1}\ell
_{X}=S^{-1}(\Omega_{\hbar D^{-1}/2}\cap\ell_{X})\\
\Omega_{\Sigma}^{\hbar,\omega}\cap\ell &  =S^{-1}(\Omega_{\hbar D^{-1}%
/2})^{\hbar,\omega}\cap S^{-1}\ell_{X}=S^{-1}((\Omega_{\hbar D^{-1}/2}%
)^{\hbar,\omega}\cap\ell_{X})
\end{align*}
and hence $\Omega_{\Sigma}^{\hbar,\omega}\cap\ell\subset\Omega_{\Sigma}%
\cap\ell$ if and only if $(\Omega_{\hbar D^{-1}/2})^{\hbar,\omega}%
\subset\Omega_{\hbar D^{-1}/2}$. It now suffices to apply Proposition
\ref{Prop3}. To prove (ii) it is sufficient to note that the equality
\[
(\Omega_{\hbar D^{-1}/2})^{\hbar,\omega}\cap\ell_{X}=\Omega_{\hbar D^{-1}%
/2}\cap\ell_{X}%
\]
is equivalent to $(\Lambda^{\omega})^{-1}=\Lambda^{\omega}$ that is to
$\Lambda^{\omega}=I_{n\times n}$ since we then have $M=S_{0}^{T}S_{0}$ in view
of (\ref{Williamson}), the proof in the general case is then completed as above.
\end{proof}

\subsection{Polar duality and the symplectic camel}

Symplectic capacities (see for instance \cite{cielibak,goluPR}) are numerical
invariants that serve as a fundamental tool in the study of various symplectic
and Hamiltonian rigidity phenomena; they are closely related to Gromov's
symplectic non-squeezing theorem \cite{gr85}; the latter is often referred to
as the \textquotedblleft principle of the symplectic camel\textquotedblright%
\ \cite{go09,MdGiop,goluPR}.

We denote $\operatorname*{Symp}(n)$ the group of all symplectomorphisms
$(\mathbb{R}_{z}^{2n},\omega)\longrightarrow(\mathbb{R}_{z}^{2n},\omega).$
That is, $f\in\operatorname*{Symp}(n)$ if and only $f$ is a diffeomorphism of
$\mathbb{R}_{z}^{2n}$ whose Jacobian matrix $Df(z)$ is in $\operatorname*{Sp}%
(n)$ for every $z\in\mathbb{R}_{z}^{2n}$.

A (normalized) symplectic capacity on $(\mathbb{R}^{2n},\sigma)$ associates to
every subset $\Omega\subset\mathbb{R}_{z}^{2n}$ a number $c(\Omega
)\in\mathbb{[}0,+\infty\mathbb{]}$ such that the following properties hold:

\begin{description}
\item[SC1] \textit{Monotonicity}: If $\Omega\subset\Omega^{\prime}$ then
$c(\Omega)\leq c(\Omega^{\prime})$;

\item[SC2] \textit{Conformality}: For every $\lambda\in\mathbb{R}$ we have
$c(\lambda\Omega)=\lambda^{2}c(\Omega)$;

\item[SC3] \textit{Symplectic invariance}: $c(f(\Omega))=c(\Omega)$ for every
$f\in\operatorname*{Symp}(n)$;

\item[SC4] \textit{Normalization}: For $1\leq j\leq n$ we have $c(B^{2n}%
(r))=\pi r^{2}=c(Z_{j}^{2n}(r))$ where $Z_{j}^{2n}(r)$ is the cylinder with
radius $r$ based on the $x_{j},p_{j}$ plane.
\end{description}

There exists a symplectic capacity, denoted by $c_{\max}$, such that $c\leq
c_{\max}$ for every symplectic capacity. It is defined by
\begin{equation}
c_{\max}(\Omega)=\inf_{f\in\operatorname*{Symp}(n)}\{\pi r^{2}:f(\Omega
)\subset Z_{j}^{2n}(r)\} \label{cmax}%
\end{equation}
where $Z_{j}^{2n}(r)$ is the phase space cylinder defined by $x_{j}^{2}%
+p_{j}^{2}\leq r^{2}$ and $\operatorname*{Symp}(n)$ the group of all
symplectomorphisms of $\mathbb{R}^{2n}$ equipped with the standard symplectic
structure. Similarly, there exists a smallest symplectic capacity $c_{\min}$,
it is defined by
\[
c_{\min}(\Omega)=\sup_{f\in\operatorname*{Symp}(n)}\{\pi r^{2}:f(B^{2n}%
(r))\subset\Omega\}.
\]
One shows \cite{armios08,arkaos13} that\textit{\ }if $X\subset\mathbb{R}%
_{x}^{n}$ and $P\subset\mathbb{R}_{p}^{n}$ are centrally symmetric convex
bodies then we have
\begin{equation}
c_{\max}(X\times P)=4\hbar\sup\{\lambda>0:\lambda X^{\hbar}\subset P\}.
\label{yaron1}%
\end{equation}
In particular,
\begin{equation}
c_{\max}(X\times X^{\hbar})=4\hbar~. \label{yaron3}%
\end{equation}
One also has the weaker notion of linear symplectic capacity, obtained by
replacing condition (SC3) with

\begin{description}
\item[SC3lin] \textit{Linear} \textit{symplectic invariance}: $c(S(\Omega
))=c(\Omega)$ for every $S\in\operatorname*{Sp}(n)$ and $c(\Omega
+z)=c(\Omega)$ for every $z\in\mathbb{R}^{2n}$.
\end{description}

One then defines the corresponding minimal and maximal linear symplectic
capacities $c_{\min}^{\mathrm{lin}}$ and $c_{\max}^{\mathrm{lin}}$
\begin{align}
c_{\min}^{\mathrm{lin}}(\Omega) &  =\sup_{S\in\operatorname*{Sp}(n)}\{\pi
R^{2}:S(B^{2n}(z,R))\subset\Omega,z\in\mathbb{R}^{2n}\}\label{clin1}\\
c_{\max}^{\mathrm{lin}}(\Omega) &  =\inf_{f\in\operatorname*{Sp}(n)}\{\pi
r^{2}:S(\Omega)\subset Z_{j}^{2n}(z,r),z\in\mathbb{R}^{2n}\}.\label{clin2}%
\end{align}

It turns out that all symplectic capacities agree on ellipsoids. They are
calculated as follows: assume that
\[
\Omega=\{z\in\mathbb{R}^{2n}:Mz\cdot z\leq r^{2}\}
\]
where $M\in\operatorname*{Sym}^{+}(2n,\mathbb{R})$, and let $\lambda
_{1}^{\sigma},\lambda_{2}^{\sigma},...,\lambda_{n}^{\sigma}$ be the symplectic
eigenvalue of $M$, \textit{i.e.} the numbers $\lambda_{j}^{\sigma}>0$ ($1\leq
j\leq n$) such that the $\pm i\lambda_{j}^{\sigma}$ are the eigenvalues of the
antisymmetric matrix $M^{1/2}JM^{1/2}$. Then%
\begin{equation}
c(\Omega)=\pi r^{2}/\lambda_{\max}^{\sigma} \label{capellipse}%
\end{equation}
where $\lambda_{\max}^{\sigma}=\max\{\lambda_{1}^{\sigma},\lambda_{2}^{\sigma
},...,\lambda_{n}^{\sigma}\}$ (see \cite{go09,goluPR}). The following
technical Lemma will allows us to prove a refinement of formula (\ref{yaron3}).

\begin{lemma}
\label{Lemmaclin}Let $\Omega\subset\mathbb{R}^{2n}$ be a centrally symmetric
body. We have
\begin{equation}
c_{\min}^{\mathrm{lin}}(\Omega)=\sup_{S\in\operatorname*{Sp}(n)}\{\pi
R^{2}:S(B^{2n}(R))\subset\Omega\}~. \label{clinmin}%
\end{equation}

\end{lemma}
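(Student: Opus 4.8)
The plan is to prove the two inequalities separately. Write $c_{\min}^{\mathrm{lin}}(\Omega)$ for the left-hand side of (\ref{clinmin}), defined by (\ref{clin1}) as a supremum over \emph{translated} symplectic balls $S(B^{2n}(z,R))=S(z)+S(B^{2n}(R))$, and write $c_{0}(\Omega)$ for the right-hand side, the supremum over the smaller family of \emph{centered} symplectic balls $S(B^{2n}(R))$. The inequality $c_{0}(\Omega)\leq c_{\min}^{\mathrm{lin}}(\Omega)$ is immediate, since every centered ball $S(B^{2n}(R))$ is the special case $z=0$ of an admissible translated ball, so the supremum defining $c_{0}$ is taken over a subset of the admissible family for $c_{\min}^{\mathrm{lin}}$.

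The substance is the reverse inequality, and the plan is to establish it by a symmetrization argument exploiting the central symmetry of $\Omega$. Suppose $E=S(B^{2n}(z,R))\subset\Omega$ for some $S\in\operatorname*{Sp}(n)$ and $z\in\mathbb{R}^{2n}$. Since $S$ is linear, $E=z_{0}+K$ where $z_{0}=S(z)$ and $K=S(B^{2n}(R))$ is a centered ellipsoid satisfying $-K=K$ (because $B^{2n}(R)=-B^{2n}(R)$). The central symmetry $\Omega=-\Omega$ then gives $-E=-z_{0}+K\subset\Omega$ as well. I would now fix an arbitrary $k\in K$: then both $z_{0}+k\in E\subset\Omega$ and $-z_{0}+k\in -E\subset\Omega$, and convexity of $\Omega$ forces their midpoint $\tfrac{1}{2}[(z_{0}+k)+(-z_{0}+k)]=k$ to lie in $\Omega$. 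Since $k\in K$ was arbitrary this yields $K=S(B^{2n}(R))\subset\Omega$, a centered symplectic ball of the \emph{same} radius $R$.

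Consequently every value $\pi R^{2}$ realized by an admissible translated ball in (\ref{clin1}) is also realized by a centered ball, whence $c_{\min}^{\mathrm{lin}}(\Omega)\leq c_{0}(\Omega)$. Combining the two inequalities gives (\ref{clinmin}).

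As for the main obstacle: there is no hard computation here, and the only point requiring care is the recognition that the optimal center can be moved to the origin \emph{without shrinking the radius}. The mechanism is precisely the midpoint convexity step combined with $-E\subset\Omega$; this is where both hypotheses on $\Omega$ (convexity and central symmetry) are genuinely used, and it is the one place where the argument would fail for a non-symmetric or non-convex body. I would also note that the argument uses nothing about $S$ beyond linearity and the invariance of the ball under $-1$, so the same statement persists if $\operatorname*{Sp}(n)$ is replaced by any linear group containing $-I$.
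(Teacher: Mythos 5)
Your proof is correct and takes essentially the same route as the paper's: use central symmetry to get that the reflected ball $-E=-z_{0}+K$ also lies in $\Omega$, then use convexity (your midpoint step is exactly the paper's interpolation $t\longmapsto z-2tz_{0}$ evaluated at $t=\tfrac{1}{2}$) to conclude that the centered ball of the same radius is contained in $\Omega$. Your write-up is in fact a little more explicit than the paper's one-line interpolation argument, and also makes the trivial inequality and the role of $-I$ precise.
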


\begin{proof}
Since $\Omega$ is centrally symmetric we have $S(B^{2n}(z_{0},R))\subset
\Omega$ if and only if $S(B^{2n}(-z_{0},R))\subset\Omega$. The ellipsoid
$S(B^{2n}(R))$ is interpolated between $S(B^{2n}(z_{0},R))$ and $S(B^{2n}%
(-z_{0},R))$ using the mapping $t\longmapsto$ $z(t)=z-2tz_{0}$ where $z\in
S(B^{2n}(z_{0},R))$, and is hence contained in $\Omega$ by convexity.
\end{proof}

\begin{proposition}
Let $c_{\min}^{\mathrm{lin}}$ be the smallest linear symplectic capacity and
$X\subset\mathbb{R}_{x}^{n}$ a centered ellipsoid. We have%
\begin{equation}
c_{\min}^{\mathrm{lin}}(X\times X^{\hbar})=\pi\hbar.\label{clinmax}%
\end{equation}

\end{proposition}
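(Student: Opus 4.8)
The plan is to compute $c_{\min}^{\mathrm{lin}}(X\times X^{\hbar})$ directly using the John ellipsoid machinery already established. By Proposition \ref{Prop1}, the John ellipsoid of $X\times X^{\hbar}$ is exactly a quantum blob, namely $M_{A^{1/2}}(B^{2n}(\sqrt{\hbar}))$ where $M_{A^{1/2}}\in\operatorname*{Sp}(n)$. Combined with Lemma \ref{Lemmaclin}, which says $c_{\min}^{\mathrm{lin}}(\Omega)$ is the supremum of $\pi R^{2}$ over all \emph{centered} symplectic balls $S(B^{2n}(R))$ fitting inside $\Omega$, this suggests the answer should just be $c_{\min}^{\mathrm{lin}}=\pi\hbar$, since $M_{A^{1/2}}(B^{2n}(\sqrt{\hbar}))$ is a centered symplectic ball of radius $\sqrt{\hbar}$ contained in $X\times X^{\hbar}$.

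Concretely, I would proceed in two directions. **For the lower bound**, the inclusion $M_{A^{1/2}}(B^{2n}(\sqrt{\hbar}))\subset X\times X^{\hbar}$ (this is the John ellipsoid, hence contained in the body by definition) together with Lemma \ref{Lemmaclin} immediately gives $c_{\min}^{\mathrm{lin}}(X\times X^{\hbar})\geq \pi(\sqrt{\hbar})^{2}=\pi\hbar$. **For the upper bound**, I would use the general monotonicity of linear symplectic capacities against the true capacity $c_{\max}$, or more directly the fact that all symplectic capacities agree on ellipsoids via formula (\ref{capellipse}). Since $c_{\min}^{\mathrm{lin}}\leq c_{\min}\leq c_{\max}$, and $X\times X^{\hbar}$ is itself an ellipsoid, I can evaluate $c(X\times X^{\hbar})$ using the symplectic eigenvalues of its defining matrix. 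From Remark \ref{Rem1}, the relevant quadratic form on the John ellipsoid involves $A$ and $A^{-1}$, whose product $JM$ has all symplectic eigenvalues equal, pinning the capacity at exactly $\pi\hbar$.

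The cleanest route, and the one I expect to be tightest, is to exploit \emph{linear} symplectic covariance directly. Applying the symplectic transformation $M_{A^{-1/2}}=M_{A^{1/2}}^{-1}$ reduces the problem to computing $c_{\min}^{\mathrm{lin}}$ of $M_{A^{-1/2}}(X\times X^{\hbar})$. Since $M_{A^{1/2}}(B^{2n}(\sqrt{\hbar}))=(X\times X^{\hbar})_{\mathrm{John}}$, the transformed body $M_{A^{-1/2}}(X\times X^{\hbar})$ has John ellipsoid exactly $B^{2n}(\sqrt{\hbar})$, and by linear symplectic invariance $c_{\min}^{\mathrm{lin}}$ is unchanged. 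One then only needs the fact that a centered body whose John ellipsoid is $B^{2n}(\sqrt{\hbar})$ admits no larger inscribed symplectic ball, which follows because the John ellipsoid is the \emph{maximal-volume} inscribed ellipsoid and any inscribed symplectic ball $S(B^{2n}(R))$ of radius $R>\sqrt{\hbar}$ would have volume exceeding that of the John ellipsoid.

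**The main obstacle** will be justifying the upper bound rigorously, i.e.\ showing that no inscribed symplectic ball can have radius strictly exceeding $\sqrt{\hbar}$. The volume comparison gives the right heuristic since a symplectic ball $S(B^{2n}(R))$ has the same volume as $B^{2n}(R)$ (symplectic maps are volume-preserving), so $R>\sqrt{\hbar}$ would contradict the maximality of the John ellipsoid. I would phrase this as: any centered symplectic ball inscribed in $X\times X^{\hbar}$ is an inscribed ellipsoid, hence has volume at most that of $(X\times X^{\hbar})_{\mathrm{John}}=M_{A^{1/2}}(B^{2n}(\sqrt{\hbar}))$, which equals $\operatorname{vol}(B^{2n}(\sqrt{\hbar}))$; since volume scales as $R^{2n}$, this forces $R\leq\sqrt{\hbar}$ and thus $c_{\min}^{\mathrm{lin}}\leq\pi\hbar$. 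Combining both bounds yields the claimed equality (\ref{clinmax}).
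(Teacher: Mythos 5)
Your main argument is correct and is essentially the paper's own proof: the lower bound comes from Lemma \ref{Lemmaclin} plus Proposition \ref{Prop1} (the John ellipsoid of $X\times X^{\hbar}$ is the centered symplectic ball $M_{A^{1/2}}(B^{2n}(\sqrt{\hbar}))$), and the upper bound from the fact that any inscribed symplectic ball $S(B^{2n}(R))$ has volume $\operatorname{vol}(B^{2n}(R))$ and so cannot exceed the maximal-volume John ellipsoid unless $R\leq\sqrt{\hbar}$ --- the paper compresses this into the phrase \textquotedblleft since it is also the largest ellipsoid contained in $X\times X^{\hbar}$\textquotedblright, and your volume-preservation argument is the right way to make that step explicit. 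One caveat: discard your alternative route through $c_{\max}$ and formula (\ref{capellipse}), since $X\times X^{\hbar}$ is a product of ellipsoids, \emph{not} an ellipsoid in $\mathbb{R}^{2n}$, and in any case $c_{\max}(X\times X^{\hbar})=4\hbar>\pi\hbar$ by (\ref{yaron3}), so that bound is not tight; only your volume argument closes the proof.
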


\begin{proof}
In view of Lemma \ref{Lemmaclin} $c_{\min}^{\mathrm{lin}}(X\times X^{\hbar})$
is the greatest number $\pi R^{2}$ such that $X\times X^{\hbar}$ contains a
symplectic ball $S(B^{2n}(R)$, $S\in\operatorname*{Sp}(n)$. In view of
Proposition \ref{Prop1} $M_{A^{1/2}}(B^{2n}(\sqrt{\hbar}))$ is such a
symplectic ball; since it is also the largest ellipsoid contained in $X\times
X^{\hbar}$ we must have
\[
c_{\min}^{\mathrm{lin}}(X\times X^{\hbar})=c_{\min}^{\mathrm{lin}}(M_{A^{1/2}%
}(B^{2n}(\sqrt{\hbar})))=\pi\hbar.
\]

\end{proof}

\section{Projections and Intersections of Quantum Blobs}

In this section we generalize the observation made in Remark \ref{Rem1}.

\subsection{Block matrix notation}

For $M\in\operatorname*{Sym}_{++}(2n,\mathbb{R})$ we consider again the phase
space ellipsoid%
\begin{equation}
\Omega=\{z\in\mathbb{R}^{2n}:Mz\cdot z\leq\hbar\}. \label{Mellipse}%
\end{equation}

Let us write $M=\frac{\hbar}{2}\Sigma^{-1}$ and $\Sigma$ in block-matrix form
\begin{equation}
M=%
\begin{pmatrix}
M_{XX} & M_{XP}\\
M_{PX} & M_{PP}%
\end{pmatrix}
\text{ \ },\text{ \ }\Sigma=%
\begin{pmatrix}
\Sigma_{XX} & \Sigma_{XP}\\
\Sigma_{PX} & \Sigma_{PP}%
\end{pmatrix}
\label{M}%
\end{equation}
where the blocks are $n\times n$ matrices. The condition $M\in
\operatorname*{Sym}_{++}(2n,\mathbb{R})$ ensures us that $M_{XX}>0$,
$M_{PP}>0$, and $M_{PX}=M_{XP}^{T}$ (\textit{resp}. $\Sigma_{XX}>0$,
$\Sigma_{PP}>0$, and $\Sigma_{PX}=\Sigma_{XP}^{T}$; see \cite{zhang}). Using
classical formulas for the inversion of block matrices \cite{Tzon} we have
\begin{equation}
M^{-1}=%
\begin{pmatrix}
(M/M_{PP})^{-1} & -(M/M_{PP})^{-1}M_{XP}M_{PP}^{-1}\\
-M_{PP}^{-1}M_{PX}(M/M_{PP})^{-1} & (M/M_{XX})^{-1}%
\end{pmatrix}
\label{Minverse}%
\end{equation}
where $M/M_{PP}$ and $M/M_{XX}$ are the Schur complements:%
\begin{align}
M/M_{PP} &  =M_{XX}-M_{XP}M_{PP}^{-1}M_{PX}\label{schurm1}\\
M/M_{XX} &  =M_{PP}-M_{PX}M_{XX}^{-1}M_{XP}.\label{schurm2}%
\end{align}
Similarly,%
\begin{equation}
\Sigma^{-1}=%
\begin{pmatrix}
(\Sigma/\Sigma_{PP})^{-1} & -(\Sigma/\Sigma_{PP})^{-1}\Sigma_{XP}\Sigma
_{PP}^{-1}\\
-\Sigma_{PP}^{-1}\Sigma_{PX}(\Sigma/\Sigma_{BB})^{-1} & (\Sigma/\Sigma
_{XX})^{-1}%
\end{pmatrix}
\label{covinv}%
\end{equation}
Notice that these formulas imply%
\begin{gather}
\Sigma_{XX}=\frac{\hbar}{2}(M/M_{PP})^{-1}\text{ },\text{ }\Sigma_{PP}%
=\frac{\hbar}{2}(M/M_{XX})^{-1}\label{msig1}\\
\Sigma_{XP}=-\frac{\hbar}{2}(M/M_{PP})^{-1}M_{XP}M_{PP}^{-1}.\label{msig2}%
\end{gather}
Let $M$ be the symmetric positive definite matrix (\ref{M}). The following
results is well-known (see for instance \cite{gopolar}):

\begin{lemma}
\label{LemmaProj}The orthogonal projections $\Pi_{\ell_{X}}\Omega$ and
$P=\Pi_{\ell_{P}}\Omega$ on the coordinate subspaces $\ell_{X}=\mathbb{R}%
_{x}^{n}\times0$ and $\ell_{P}=0\times\mathbb{R}_{p}^{n}$ of $\Omega$ are the
ellipsoids%
\begin{align}
\Pi_{\ell_{X}}\Omega &  =\{x\in\mathbb{R}_{x}^{n}:(M/M_{PP})x\cdot x\leq
\hbar\}\label{boundb}\\
\Pi_{\ell_{P}}\Omega &  =\{p\in\mathbb{R}_{p}^{n}:(M/M_{XX})p\cdot p\leq
\hbar\}. \label{bounda}%
\end{align}
In terms of the covariance matrix $\Sigma$ and the formulas (\ref{msig1}) this
is%
\begin{align}
\Pi_{\ell_{X}}\Omega &  =\{x\in\mathbb{R}_{x}^{n}:\tfrac{1}{2}\Sigma_{XX}%
^{-1}x\cdot x\leq1\}\label{boundc}\\
\Pi_{\ell_{P}}\Omega &  =\{p\in\mathbb{R}_{p}^{n}:\tfrac{1}{2}\Sigma_{PP}%
^{-1}p\cdot p\leq1\}. \label{boundd}%
\end{align}

\end{lemma}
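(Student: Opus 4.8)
The plan is to reduce the geometry to a block minimization. Because $\ell_X=\mathbb{R}^n_x\times 0$ and $\ell_P=0\times\mathbb{R}^n_p$ are orthogonal complements in $\mathbb{R}^{2n}$, the orthogonal projection of a point $(x,p)$ onto $\ell_X$ is simply $x$, so that
$$\Pi_{\ell_X}\Omega=\{x\in\mathbb{R}^n_x:\exists\,p\in\mathbb{R}^n_p,\ M(x,p)\cdot(x,p)\leq\hbar\}.$$
Thus a given $x$ lies in the projection precisely when the partial minimum $\min_p M(x,p)\cdot(x,p)$ does not exceed $\hbar$; the minimum is attained because $M_{PP}>0$. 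The whole proof then comes down to evaluating this partial minimum.

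First I would expand the quadratic form using the block decomposition (\ref{M}) and the symmetry $M_{PX}=M_{XP}^{T}$, giving $M(x,p)\cdot(x,p)=M_{XX}x\cdot x+2M_{PX}x\cdot p+M_{PP}p\cdot p$. The key step is the Schur-complement completion of the square
$$M(x,p)\cdot(x,p)=(M/M_{PP})x\cdot x+M_{PP}(p+M_{PP}^{-1}M_{PX}x)\cdot(p+M_{PP}^{-1}M_{PX}x),$$
with $M/M_{PP}=M_{XX}-M_{XP}M_{PP}^{-1}M_{PX}$ exactly as in (\ref{schurm1}) (this is just the $LDL^{T}$ factorization of $M$ underlying the inversion formula (\ref{Minverse})). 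Since $M_{PP}>0$, the second summand is a nonnegative quadratic in $p$ vanishing at $p=-M_{PP}^{-1}M_{PX}x$, so $\min_p M(x,p)\cdot(x,p)=(M/M_{PP})x\cdot x$. Hence $x\in\Pi_{\ell_X}\Omega$ if and only if $(M/M_{PP})x\cdot x\leq\hbar$, which is (\ref{boundb}). Interchanging the roles of $x$ and $p$ swaps $M_{XX}\leftrightarrow M_{PP}$ and $M_{XP}\leftrightarrow M_{PX}$, turning $M/M_{PP}$ into $M/M_{XX}$ and yielding (\ref{bounda}).

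Finally, the covariance-matrix forms (\ref{boundc})--(\ref{boundd}) are immediate once the Schur-complement description is in hand: the relations (\ref{msig1}) give $M/M_{PP}=\tfrac{\hbar}{2}\Sigma_{XX}^{-1}$ and $M/M_{XX}=\tfrac{\hbar}{2}\Sigma_{PP}^{-1}$, so that $(M/M_{PP})x\cdot x\leq\hbar$ is the same inequality as $\tfrac{1}{2}\Sigma_{XX}^{-1}x\cdot x\leq 1$, and likewise for $p$. I do not anticipate a genuine obstacle here, since the argument is a coordinate computation organized around an identity already recorded in (\ref{Minverse})--(\ref{schurm2}). The only point deserving an explicit word is the reduction in the first paragraph — that the orthogonal projection equals the sublevel set of the partial minimum — which relies on the minimum being attained, guaranteed by $M_{PP}>0$.
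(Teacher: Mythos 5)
Your proof is correct and complete. The paper itself does not prove Lemma \ref{LemmaProj}; it states it as well known and refers to \cite{gopolar}, so there is no in-text argument to compare against. Your route --- identifying $\Pi_{\ell_X}\Omega$ with the sublevel set $\{x:\min_p M(x,p)\cdot(x,p)\le\hbar\}$ and evaluating the partial minimum by the Schur-complement completion of the square, $M(x,p)\cdot(x,p)=(M/M_{PP})x\cdot x+M_{PP}(p+M_{PP}^{-1}M_{PX}x)\cdot(p+M_{PP}^{-1}M_{PX}x)$ --- is the standard proof of this fact and meshes exactly with the block formulas (\ref{Minverse})--(\ref{schurm2}) and (\ref{msig1}) that the paper records. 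You are also right to flag attainment of the minimum (guaranteed by $M_{PP}>0$) as the one point where the reduction could fail at the boundary; that is the only subtlety in the argument.
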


\subsection{Reconstruction of quantum blobs: discussion}

We have seen in Proposition \ref{Prop1} that if $X\subset\mathbb{R}_{x}^{n}$
is a centered ellipsoid then the John ellipsoid of $X\times X^{\hslash}$ is a
a quantum blob. By construction, the orthogonal projections of this quantum
blob on the position and momentum spaces are precisely $X$ and $X^{\hslash}$,
respectively. In this section we address the following question: for a given
ellipsoid $X$ are there other quantum blobs projecting this way? The key to
the answer lies in the following simple observation:

\begin{lemma}
\label{LemmaBlob}The ellipsoid $\Omega$ is a quantum blob $S(B^{2n}%
(\sqrt{\hbar}))$, $S\in\operatorname*{Sp}(n)$ if and only if the block entries
of $M=(SS^{T})^{-1}$ satisfy%
\begin{equation}
M_{XX}M_{PP}-M_{XP}^{2}=I_{n\times n}\text{ , }M_{PX}M_{PP}=M_{PP}M_{XP}.
\label{RSMatrixM}%
\end{equation}
These relations are in turn equivalent to%
\begin{equation}
\Sigma_{XX}\Sigma_{PP}-\Sigma_{XP}^{2}=\tfrac{1}{4}\hbar^{2}I_{n\times
n}\text{ \textit{and} }\Sigma_{PX}\Sigma_{PP}=\Sigma_{PP}\text{ }\Sigma_{XP}
\label{RSMatrix}%
\end{equation}
where $M=(\hbar/2)\Sigma^{-1}.$
\end{lemma}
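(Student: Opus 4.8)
The plan is to translate the geometric statement ``$\Omega$ is a quantum blob'' into a purely algebraic condition on $M$, and then read off the block relations by a direct but careful computation. First I would note that $\Omega=S(B^{2n}(\sqrt{\hbar}))$ means exactly $M=(SS^{T})^{-1}$: substituting $z=Su$ with $u\cdot u\leq\hbar$ turns the defining inequality $z\cdot z\leq\hbar$ for the ball into $(SS^{T})^{-1}z\cdot z\leq\hbar$, and comparing positive-definite matrices forces $M=(SS^{T})^{-1}$. The matrices of the form $SS^{T}$ with $S\in\operatorname*{Sp}(n)$ are precisely the positive-definite \emph{symplectic} matrices: each $SS^{T}$ is positive definite and symplectic (the transpose of a symplectic matrix is symplectic, hence so is the product), and conversely any $G\in\operatorname*{Sym}_{++}(2n,\mathbb{R})$ that is symplectic equals $SS^{T}$ with $S=G^{1/2}$, using the standard fact that the positive-definite square root of a positive-definite symplectic matrix is again symplectic. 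Since the inverse of such a matrix is again of this type, I conclude that $\Omega$ is a quantum blob if and only if the (already positive definite) matrix $M$ is symplectic, i.e. $MJM=J$ with $M=M^{T}$.

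Next I would expand $MJM=J$ in the block form (\ref{M}), using $M_{PX}=M_{XP}^{T}$. A direct multiplication produces four block identities: the off-diagonal blocks give $M_{XX}M_{PP}-M_{XP}^{2}=I$ and its transpose $M_{PP}M_{XX}-M_{PX}^{2}=I$, while the diagonal blocks give $M_{XX}M_{PX}=M_{XP}M_{XX}$ and $M_{PX}M_{PP}=M_{PP}M_{XP}$. Two of these are precisely the relations (\ref{RSMatrixM}), and the off-diagonal companion is just the transpose of the first, hence automatic. The step that genuinely needs an argument, and which I expect to be the main obstacle, is showing that the remaining diagonal relation $M_{XX}M_{PX}=M_{XP}M_{XX}$ is \emph{implied} by (\ref{RSMatrixM}), so that the two stated relations are truly equivalent to full symplecticity. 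I would obtain it by cancellation: from $M_{PX}M_{PP}=M_{PP}M_{XP}$ I get $M_{XX}M_{PX}M_{PP}=M_{XX}M_{PP}M_{XP}$, and then $M_{XX}M_{PP}=I+M_{XP}^{2}$ rewrites the right-hand side as $(I+M_{XP}^{2})M_{XP}=M_{XP}(I+M_{XP}^{2})=M_{XP}M_{XX}M_{PP}$; cancelling the invertible factor $M_{PP}$ on the right yields $M_{XX}M_{PX}=M_{XP}M_{XX}$, completing the equivalence.

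Finally, for the equivalence of (\ref{RSMatrixM}) with (\ref{RSMatrix}) I would argue intrinsically rather than through Schur complements. Since $M=\tfrac{\hbar}{2}\Sigma^{-1}$, i.e. $\Sigma=\tfrac{\hbar}{2}M^{-1}$, and $M^{-1}JM^{-1}=J$ holds exactly when $MJM=J$, the matrix $M$ is symplectic if and only if $\Sigma J\Sigma=\tfrac{\hbar^{2}}{4}J$. Expanding this last identity in the block form (\ref{M}) by the same computation as above, now with $\tfrac{\hbar^{2}}{4}I$ in place of $I$, produces the two relations (\ref{RSMatrix}) together with their redundant transpose and commutation companions, the latter eliminated by the identical cancellation argument. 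Once the middle cancellation step is in place, the rest is bookkeeping.
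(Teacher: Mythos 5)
Your proposal is correct and follows essentially the same route as the paper: reduce ``$\Omega$ is a quantum blob'' to ``$M$ is a positive definite symplectic matrix'', i.e.\ $MJM=J$ with $M=M^{T}$, expand in blocks, and pass to $\Sigma$ via the fact that the inverse of a symplectic matrix is symplectic. The only difference is that you make explicit two points the paper's one-line proof glosses over --- the characterization of $SS^{T}$, $S\in\operatorname*{Sp}(n)$, as the positive definite symplectic matrices, and the cancellation argument showing that the two relations in (\ref{RSMatrixM}) already imply the remaining block identities of $MJM=J$ --- and both of these verifications are carried out correctly.
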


\begin{proof}
The ellipsoid $\Omega$ is the set of all $z\in\mathbb{R}^{2n}$ such that
$(SS^{T})^{-1}z\cdot z\leq\hbar$. The positive definite matrix $M=(S^{T}%
S)^{-1}$ is thus symplectic. This condition is equivalent to the matrix
relation $MJM=J$, which is itself equivalent to the conditions (\ref{RSMatrix}%
). In this case the matrix $(2/\hbar)\Sigma$ is also symplectic, whence the
conditions (\ref{RSMatrix}).
\end{proof}

\begin{remark}
The conditions (\ref{RSMatrix}) constitute the matrix form of the saturated
Robertson--Schr\"{o}dinger uncertainty principle \cite{go09,goluPR}.
\end{remark}

Explicitly the ellipsoid $\Omega$ is the set of all $z=(x,p)$ such that
$M/M_{PP}$%
\begin{equation}
M_{XX}x\cdot x+(M_{XP}+M_{XP}^{T})x\cdot p+M_{PP}p\cdot p\leq\hslash
;\label{eq}%
\end{equation}
the necessary and sufficient conditions for $\Omega$ to be a quantum blob are
given by the conditions in (\ref{RSMatrixM}) in the lemma above. Let us now
determine the orthogonal projection $X=\Pi_{\ell_{X}}\Omega$ of the quantum
blob $\Omega$ on the position space $\ell_{X}=\mathbb{R}_{x}^{n}\times0$. By
formula (\ref{boundb}) $X$ is the set of all $x$ such that $(M/M_{PP}%
)x^{2}\leq\hbar$. Using the relations (\ref{RSMatrixM}) we have
\begin{align*}
M/M_{PP} &  =M_{XX}-M_{XP}M_{PP}^{-1}M_{PX}\\
&  =(M_{XX}M_{PP}-M_{XP}^{2})M_{PP}^{-1}\\
&  =M_{PP}^{-1}%
\end{align*}
and hence
\begin{align*}
X &  =\{x\in\mathbb{R}_{x}^{n}:M_{PP}^{-1}x\cdot x\leq\hbar\}\\
X^{\hbar} &  =\{p\in\mathbb{R}_{p}^{n}:M_{PP}p\cdot p\leq\hbar\}
\end{align*}
Similarly, the projection $\Pi_{P}\Omega$ on $\ell_{P}=0\times\mathbb{R}%
_{p}^{n}$ is the momentum space ellipsoid
\[
P=\{p\in\mathbb{R}_{p}^{n}:M_{XX}^{-1}p\cdot p\leq\hbar\}.
\]
We thus have $X^{\hbar}=P$ if and only if $M_{PP}M_{XX}=I_{n\times n}$ which
is possible if and only if $M_{XP}=0$, that is, $\Omega$ must be the John
ellipsoid of $X\times X^{\hslash}$. The latter is thus the \emph{only} quantum
blob projecting orthogonally on $X$ and $X^{\hslash}$. This will be discussed
in a more general setting in Theorem \ref{Thm1} below.

Let us next assume that we know: \textit{(i)} the orthogonal projection
$X=\Pi_{\ell_{X}}\Omega$ of the quantum blob $\Omega$ and \textit{(ii)} the
intersections $\Omega\cap\ell_{X}$ and $\Omega\cap\ell_{P}$ of the quantum
blob with the position and momentum spaces:
\begin{gather}
\Omega\cap\ell_{X}=\{x\in\mathbb{R}_{x}^{n}:M_{XX}x\cdot x\leq\hbar
\}\label{omex}\\
\Omega\cap\ell_{P}=\{p\in\mathbb{R}_{p}^{n}:M_{PP}p\cdot p\leq\hbar\}.
\label{omep}%
\end{gather}
We observe that the knowledge of these intersections is not sufficient to
determine $\Omega$. We have to complement these with the first relation
(\ref{RSMatrixM}) to get the lacking
\'{}%
term $M_{XP}$. The solution is however not unique; for instance in the case
$n=1$ we have two solutions $M_{XP}=\pm(M_{XX}M_{PP}-1)^{1/2}$ and the number
of solutions increases with $n$. Observe that the case $M_{XP}=0$ precisely
corresponds to the John ellipsoid. This is closely related to the Pauli
problem \cite{Pauli} for generalized Gaussians..

\subsection{Intersections with Lagrangian planes}

Orthogonal projections and intersections are exchanged by polar duality:

\begin{proposition}
\label{Propinter}(i) For every linear subspace $F$ of $\mathbb{R}^{n}$ we
have
\begin{equation}
(X\cap\ell)^{\hbar}=\Pi_{\ell}(X^{\hslash})\text{ \textit{and} }(\Pi_{\ell
}X)^{\hbar}=X^{\hslash}\cap\ell\label{projint}%
\end{equation}
where $\Pi_{\ell}$ is the orthogonal projection $\mathbb{R}_{x}^{n}%
\longrightarrow\ell$. (In both equalities, the operation of taking the polar
set in the left hand side is made inside $\ell$). (ii) Let $\ell$ be a linear
subspace of $\mathbb{R}^{2n}$ and $\Omega$ a symmetric convex body in
$\mathbb{R}^{2n}$. We have
\begin{equation}
(\Omega\cap\ell)^{\hbar,\omega}=\Pi_{J\ell}(\Omega^{\hbar,\omega})\text{ and
}(\Pi_{J\ell}\Omega)^{\hbar,\omega}=\Omega^{\hbar,\omega}\cap\ell
\label{projinter}%
\end{equation}
where $JF$ is the orthogonal subspace to $F$.
\end{proposition}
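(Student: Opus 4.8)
The plan is to establish part (i) directly from the definition \eqref{omo2} of the polar dual, and then to deduce part (ii) formally from (i) by means of the identity $\Omega^{\hbar,\omega}=J(\Omega^{\hbar})$ of \eqref{omegapol2}, exploiting that $J$ is simultaneously orthogonal ($J^{T}J=I$) and symplectic ($J^{2}=-I$). For part (i) I would first prove the identity $(\Pi_{\ell}X)^{\hbar}=X^{\hslash}\cap\ell$. Fix $p\in\ell$ and $x\in X$, and split $x$ into its component in $\ell$ and its component orthogonal to $\ell$; since $p\in\ell$, the orthogonal component contributes nothing to $p\cdot x$, so $p\cdot x=p\cdot\Pi_{\ell}x$. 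Taking suprema gives $\sup_{x\in X}(p\cdot x)=\sup_{y\in\Pi_{\ell}X}(p\cdot y)$, whence the condition $p\in X^{\hslash}$ (for $p\in\ell$) is exactly the condition that $p$ belong to the polar of $\Pi_{\ell}X$ computed inside $\ell$. This is the asserted equality.

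For the companion identity $(X\cap\ell)^{\hbar}=\Pi_{\ell}(X^{\hslash})$ I would apply the identity just proved to the body $X^{\hslash}$ in place of $X$, obtaining $(\Pi_{\ell}X^{\hslash})^{\hbar}=(X^{\hslash})^{\hbar}\cap\ell=X\cap\ell$ by reflexivity, and then take polars inside $\ell$, using that polar duality is an involution on convex bodies in $\ell$ containing the origin in their relative interior. Passing now to part (ii), the essential structural remark is that, $J$ being orthogonal, conjugation by $J$ sends orthogonal projections to orthogonal projections: $J\,\Pi_{F}=\Pi_{JF}\,J$ for every subspace $F$. Using \eqref{omegapol2} and the first relation of (i) applied in $\mathbb{R}^{2n}$ I get $(\Omega\cap\ell)^{\hbar,\omega}=J\bigl((\Omega\cap\ell)^{\hbar}\bigr)=J\bigl(\Pi_{\ell}(\Omega^{\hbar})\bigr)=\Pi_{J\ell}\bigl(J(\Omega^{\hbar})\bigr)=\Pi_{J\ell}(\Omega^{\hbar,\omega})$. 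For the second relation, the second identity of (i) gives $(\Pi_{J\ell}\Omega)^{\hbar,\omega}=J\bigl(\Omega^{\hbar}\cap J\ell\bigr)=J(\Omega^{\hbar})\cap J(J\ell)=\Omega^{\hbar,\omega}\cap\ell$, where the last step uses $J^{2}=-I$ together with $-\ell=\ell$.

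The only delicate point is in part (i): one must keep careful track of the ambient space in which each polar is taken (all polars on the $\ell$-side are formed inside $\ell$), and check the mild regularity hypotheses --- that $X\cap\ell$ is a convex body with $0$ in its relative interior and that $\Pi_{\ell}(X^{\hslash})$ is likewise a body --- so that reflexivity and involutivity are legitimately available. Once part (i) is in hand, the symplectic step (ii) is purely formal, resting solely on $J^{T}J=I$ and $J^{2}=-I$; in the Lagrangian case one has $J\ell=\ell^{\perp}$, which is the interpretation recorded after the statement.
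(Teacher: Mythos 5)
Your proposal is correct and follows essentially the same route as the paper: one of the two dual identities in (i) is proved directly and the other is deduced via reflexivity of polar duality, and (ii) is obtained formally from (i) using $\Omega^{\hbar,\omega}=J(\Omega^{\hbar})$ together with the intertwining $J\Pi_{\ell}=\Pi_{J\ell}J$. The only difference is one of order and mechanism --- you establish $(\Pi_{\ell}X)^{\hbar}=X^{\hslash}\cap\ell$ first via the one-line observation that $\sup_{x\in X}(p\cdot x)=\sup_{y\in\Pi_{\ell}X}(p\cdot y)$ for $p\in\ell$, whereas the paper proves $(X\cap\ell)^{\hbar}=\Pi_{\ell}(X^{\hslash})$ first by a double-inclusion argument --- and you correctly flag the regularity checks needed to invoke reflexivity inside $\ell$.
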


\begin{proof}
(i) (See Vershynin \cite{Vershynin}). Let us first show that $\Pi_{\ell
}(X^{\hslash})\subset(X\cap\ell)^{\hbar}$. Let $p\in X^{\hslash}$. We have,
for every $x\in X\cap\ell$,
\[
x\cdot\Pi_{\ell}p=\Pi_{\ell}x\cdot p=x\cdot p\leq\hbar
\]
hence $\Pi_{\ell}p\in(X\cap\ell)^{\hbar}$. To prove the inverse inclusion we
note that it is sufficient, by the anti-monotonicity property of polar
duality, to prove that $(\Pi_{\ell}(X^{\hslash}))^{\hbar}\subset X\cap\ell$.
Let $x\in(\Pi_{\ell}(X^{\hslash}))^{\hbar}$; we have $x\cdot\Pi_{\ell}%
p\leq\hbar$ for every $p\in X^{\hslash}$. Since $x\in\ell$ (because the dual
of a subset of $\ell$ is in $\ell$) we also have
\[
\hbar\geq x\cdot\Pi_{\ell}p=\Pi_{\ell}x\cdot p=x\cdot p
\]
from which follows that $x\in(X^{\hbar})^{\hbar}=X$, which shows that $x\in
X\cap\ell$. This completes the proof of the first formula in (\ref{projint}).
The second formula in (\ref{projint}) follows by duality, noting that in view
of the reflexivity of polar duality we have%
\[
(X^{\hslash}\cap\ell)^{\hbar}=\Pi_{\ell}(X^{\hslash})^{\hbar}=\Pi_{\ell}X
\]
and hence $X^{\hslash}\cap\ell=(\Pi_{\ell}X)^{\hbar}$. (ii) We have
$(\Omega\cap\ell)^{\hbar}=\Pi_{\ell}(\Omega^{\hbar})$ and hence
\[
(\Omega\cap\ell)^{\hbar,\omega}=J(\Omega\cap\ell)^{\hbar}=J\Pi_{\ell}%
(\Omega^{\hbar})
\]
hence the first formula (\ref{projinter}) noting that%
\[
\Pi_{\ell}(\Omega^{\hbar})=J\Pi_{\ell}J^{-1}(J\Omega^{\hbar})=\Pi_{J\ell
}\Omega^{\hbar,\omega}.
\]
The second formula (\ref{projinter}) follows by duality.
\end{proof}

The following result considerably improves the statements we gave in
\cite{gopolar}:

\begin{theorem}
\label{Thm1}A centered phase space ellipsoid $\Omega=\{z:Mz\cdot z\leq\hbar\}$
($M\in\operatorname*{Sym}_{++}(2n,\mathbb{R})$) is a quantum blob if and only
if the equivalent conditions
\begin{equation}
(\Pi_{\ell_{X}}\Omega)^{\hbar}=\Omega\cap\ell_{P}\text{ \ , \ }(\Pi_{\ell_{X}%
}\Omega)^{\hbar,\omega}=(J\Omega)\cap\ell_{X}. \label{bon1}%
\end{equation}
are satisfied. In terms of the matrix $M$ these conditions are equivalent to
the identity%
\begin{equation}
M_{PP}(M/M_{PP})=I_{n\times n}. \label{cond0}%
\end{equation}

\end{theorem}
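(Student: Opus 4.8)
The plan is to translate the two geometric identities in (\ref{bon1}) into a single matrix identity, and then to match that identity against the quantum-blob (symplecticity) characterization of Lemma \ref{LemmaBlob}. First I would observe that the two conditions in (\ref{bon1}) are images of one another under $J$: applying $J$ to the first identity and using $E^{\hbar,\omega}=J(E^{\hbar})$ from (\ref{omegapol2}) together with $J\ell_{P}=\ell_{X}$ turns $(\Pi_{\ell_{X}}\Omega)^{\hbar}=\Omega\cap\ell_{P}$ into $(\Pi_{\ell_{X}}\Omega)^{\hbar,\omega}=(J\Omega)\cap\ell_{X}$. Since $J$ is a bijection the two are equivalent, so it suffices to analyze the first one.

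Next I would make the first identity explicit. By Lemma \ref{LemmaProj} the projection is $\Pi_{\ell_{X}}\Omega=\{x:(M/M_{PP})x\cdot x\leq\hbar\}$, so by the dual-of-an-ellipsoid formula (\ref{Xelldual}) its polar is $(\Pi_{\ell_{X}}\Omega)^{\hbar}=\{p:(M/M_{PP})^{-1}p\cdot p\leq\hbar\}$; on the other hand the intersection is $\Omega\cap\ell_{P}=\{p:M_{PP}p\cdot p\leq\hbar\}$ by (\ref{omep}). Two centered ellipsoids with positive-definite defining matrices coincide iff those matrices coincide, so the first identity in (\ref{bon1}) is equivalent to $(M/M_{PP})^{-1}=M_{PP}$, which is exactly (\ref{cond0}). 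This disposes of the equivalence of (\ref{bon1}) with (\ref{cond0}) by pure computation.

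It remains to match (\ref{cond0}) with the quantum-blob property, and this is where I expect the real work. The necessity direction is the short calculation already rehearsed before the theorem: if $\Omega$ is a quantum blob, Lemma \ref{LemmaBlob} gives $M_{XX}M_{PP}-M_{XP}^{2}=I$ and $M_{PP}M_{XP}=M_{PX}M_{PP}$; the latter yields $M_{XP}M_{PP}^{-1}=M_{PP}^{-1}M_{PX}$, hence $M_{XP}M_{PP}^{-1}M_{PX}=M_{XP}^{2}M_{PP}^{-1}$ and $M/M_{PP}=(M_{XX}M_{PP}-M_{XP}^{2})M_{PP}^{-1}=M_{PP}^{-1}$, i.e. (\ref{cond0}). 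The delicate direction is the converse. By the block-inversion formula (\ref{Minverse}), (\ref{cond0}) says only that the top-left block of $M^{-1}$ equals $M_{PP}$, whereas $\Omega$ being a quantum blob means $M$ is symplectic, i.e. $M^{-1}=-JMJ$, which constrains \emph{every} block of $M$ and in particular forces the commutation relation $M_{PP}M_{XP}=M_{PX}M_{PP}$. Recovering this relation from the single identity (\ref{cond0}) is the main obstacle. For $n=1$ there is no obstacle, since (\ref{cond0}) collapses to $M_{XX}M_{PP}-M_{XP}^{2}=\det M=1$, which is precisely symplecticity of a $2\times2$ positive matrix. For $n\geq2$, however, a parameter count --- and explicit trial matrices with non-symmetric $M_{XP}$ --- suggest that (\ref{cond0}) alone admits non-symplectic positive-definite solutions, so I would expect the converse to require supplementing (\ref{cond0}) with the missing cross-block relation, either by exploiting the second identity in (\ref{bon1}) to pin down the $XP$-block or by restricting to the regime in which that relation is automatic, rather than having it follow from (\ref{cond0}) by itself.
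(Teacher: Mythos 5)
Your reduction of (\ref{bon1}) to the single matrix identity (\ref{cond0}), and your proof of the necessity direction via Lemma \ref{LemmaBlob}, coincide with what the paper does. Where you diverge is the sufficiency direction. The paper does offer an argument there: it Williamson-diagonalizes $M=S_{0}^{T}DS_{0}$ and asserts that, because symplectic maps send quantum blobs to quantum blobs, one may assume $\Omega$ is already the diagonal ellipsoid $\Omega_{0}=\{z:\Lambda^{\omega}x\cdot x+\Lambda^{\omega}p\cdot p\leq\hbar\}$, for which (\ref{bon1}) visibly forces $\Lambda^{\omega}=I_{n\times n}$. The step that needs justification there is not the covariance of the conclusion but of the hypothesis: condition (\ref{bon1}) refers to the fixed planes $\ell_{X},\ell_{P}$, to orthogonal projection, and to ordinary (non-symplectic) polar duality, none of which commutes with conjugation by $S_{0}$, so there is no reason why $S_{0}\Omega$ should satisfy (\ref{bon1}) when $\Omega$ does.

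Your suspicion that (\ref{cond0}) alone does not force symplecticity when $n\geq2$ is correct and can be made concrete. Take $n=2$, $M_{PP}=I$, $M_{XP}=\left(\begin{smallmatrix}0&1\\0&0\end{smallmatrix}\right)$, $M_{PX}=M_{XP}^{T}$, and $M_{XX}=I+M_{XP}M_{XP}^{T}=\operatorname{diag}(2,1)$. Then $M>0$ (both $M_{PP}$ and $M/M_{PP}$ are positive definite) and $M/M_{PP}=M_{XX}-M_{XP}M_{XP}^{T}=I=M_{PP}^{-1}$, so (\ref{cond0}), hence (\ref{bon1}), holds; but $M_{PX}M_{PP}=M_{XP}^{T}\neq M_{XP}=M_{PP}M_{XP}$, so by Lemma \ref{LemmaBlob} the matrix $M$ is not symplectic (its symplectic eigenvalues are $((3\pm\sqrt{5})/2)^{1/2}\neq1$) and $\Omega$ is not a quantum blob. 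So your proposal is not a complete proof of the theorem as stated, but that is because the sufficiency direction itself fails for $n\geq2$: as you anticipated, (\ref{cond0}) must be supplemented by the cross-block relation $M_{PX}M_{PP}=M_{PP}M_{XP}$ (or the geometric hypothesis strengthened, e.g.\ to an identity holding for every Lagrangian plane in the spirit of Theorem \ref{Thm3}) before the necessity computation can be reversed. Your observation that for $n=1$ condition (\ref{cond0}) collapses to $\det M=1$, which is symplecticity, is exactly right and explains why the problem is invisible in one degree of freedom.
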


\begin{proof}
That both conditions (\ref{bon1}) are equivalent is from definition
(\ref{omegapol2}) of symplectic polar duality. Writing $M$ in block matrix
form, the condition $z=(x,p)\in\Omega$ means that%
\[
M_{XX}x^{2}+(M_{XP}+M_{PX})xp+M_{PP}p^{2}\leq\hbar
\]
(we are using the abbreviations $M_{XX}x\cdot x=M_{XX}x^{2}$, etc.) and the
intersection $\Omega\cap\ell_{P}$ is therefore the set
\[
\Omega\cap\ell_{P}=\{p:M_{PP}p^{2}\leq\hbar\}.
\]
On the other hand, in view of Lemma \ref{LemmaProj},%
\[
\Pi_{\ell_{X}}\Omega=\{x:(M/M_{PP})x^{2}\leq\hbar\}
\]
and the polar dual $(\Pi_{\ell_{X}}\Omega)^{\hbar}$ is
\[
(\Pi_{\ell_{X}}\Omega)^{\hbar}=\{p:(M/M_{PP})^{-1}p^{2}\leq\hbar\}
\]
so we have to prove that $\Omega$ is a quantum blob if and only if
(\ref{cond0}) holds. Using the explicit expression (\ref{schurm1}) of the
Schur complement this is equivalent to the condition%
\begin{equation}
(M_{XX}-M_{XP}M_{PP}^{-1}M_{PX})M_{PP}=I_{n\times n}. \label{cond1}%
\end{equation}
Assume now that $\Omega$ is a quantum blob; then $\Omega=S(B^{2n}(\sqrt{\hbar
}))$ for some $S\in\operatorname*{Sp}(n)$; then $z\in\Omega$ if and only if
$Mz\cdot z\leq\hbar$ where $M=(S^{T})^{-1}S^{-1}$. Since $M\in
\operatorname*{Sp}(n)\cap\operatorname*{Sym}_{++}(2n,\mathbb{R})$ we have
$M_{PP}M_{XP}=M_{PX}M_{PP}$ (second formula (\ref{RSMatrixM}) in Lemma
\ref{LemmaBlob}) and hence
\begin{align*}
(M_{XX}-M_{XP}M_{PP}^{-1}M_{PX})M_{PP}  &  =M_{XX}M_{PP}-M_{XP}M_{PP}%
^{-1}(M_{PX}M_{PP})\\
&  =M_{XX}M_{PP}-(M_{XP})^{2}.
\end{align*}
Using the first formula (\ref{RSMatrixM}) in Lemma \ref{LemmaBlob} we thus
have
\begin{equation}
(M_{XX}-M_{XP}M_{PP}^{-1}M_{PX})M_{PP}=I_{n\times n} \label{mim}%
\end{equation}
which implies that $(\Pi_{\ell_{X}}\Omega)^{\hbar}=\Omega\cap\ell_{P}$, so we
have proven the necessity of the condition (\ref{bon1}). Let us prove that
this condition is sufficient as well. In view of Williamson's diagonalization
result (\ref{Williamson}) we have $M=S_{0}^{T}%
\begin{pmatrix}
\Lambda^{\omega} & 0\\
0 & \Lambda^{\omega}%
\end{pmatrix}
S_{0}$ for some $S_{0}\in\operatorname*{Sp}(n)$ where $\Lambda^{\omega}$ is
the diagonal matrix whose non-zero entries are the symplectic eigenvalues of
$M$.. Since a symplectic automorphism transforms a quantum blob into another
quantum blob, we can reduce the proof of the sufficiency of (\ref{bon1}) to
the case where $\Omega$ is the ellipsoid
\[
\Omega_{0}=\{z\in\mathbb{R}^{2n}:\Lambda^{\omega}x^{2}+\Lambda^{\omega}%
p^{2}\leq\hbar\}.
\]
We have here $\Pi_{\ell_{X}}\Omega_{0}=\{x:\Lambda^{\omega}x^{2}\leq\hbar\}$
hence
\[
(\Pi_{\ell_{X}}\Omega_{0})^{\hbar}=\{x:(\Lambda^{\omega})^{-1}x^{2}\leq
\hbar\}
\]
and $\Omega\cap\ell_{P}=\{x:\Lambda^{\omega}x^{2}\leq\hbar\}$. The equality
$(\Pi_{\ell_{X}}\Omega_{0})^{\hbar}=\Omega\cap\ell_{P}$ thus implies that
$\Lambda^{\omega}=I_{n\times n}$ hence $M=S_{0}^{T}S_{0}\in\operatorname*{Sp}%
(n)$ and $\Omega$ is thus the quantum blob $S_{0}^{-1}(B^{2n}(\sqrt{\hbar}))$.
\end{proof}

\section{Gaussian Quantum Phase Space}

In this section we apply some of our previous geometric results to the theory
of Gaussian states.

\subsection{Generalized Gaussians and their Wigner transforms}

Recall that the Wigner transform (or function) of a square integrable function
$\psi:\mathbb{R}^{n}\longrightarrow\mathbb{C}$ is the function $W\psi\in
C^{0}(\mathbb{R}^{2n},\mathbb{R)}$ defined by the absolutely convergent
integral
\begin{equation}
W\psi(x,p)=\left(  \tfrac{1}{2\pi\hbar}\right)  ^{n}\int e^{-\frac{i}{\hbar
}py}\psi(x+\tfrac{1}{2}y)\psi^{\ast}(x-\tfrac{1}{2}y)d^{n}y~. \label{wigtra}%
\end{equation}
The Wigner transform satisfies the Moyal identity%
\begin{equation}
(W\psi|W\phi)_{L^{2}(\mathbb{R}^{2n})}=(2\pi\hbar)^{-n}|(\psi|\phi
)|_{L^{2}(\mathbb{R}^{n})}^{2} \label{Moyal}%
\end{equation}
which implies, in particular, that
\begin{equation}
||W\psi||_{L^{2}(\mathbb{R}^{2n})}=(2\pi\hbar)^{-n/2}||\psi||_{L^{2}%
(\mathbb{R}^{n})}^{2}. \label{oupsi}%
\end{equation}

An important property satisfied by the Wigner transform is its symplectic
covariance: for every $S\in\operatorname*{Sp}(n)$ and $\psi\in L^{2}%
(\mathbb{R}^{n})$ we have
\begin{equation}
W\psi(S^{-1}z)=W(\widehat{S}\psi)(z) \label{symco}%
\end{equation}
where $\widehat{S}\in\operatorname*{Mp}(n)$ is one of the two metaplectic
operators projecting onto $S$ (recall \cite{Birk} that $\operatorname*{Mp}%
(n)$, the metaplectic group, is a unitary representation in $L^{2}%
(\mathbb{R}^{n})$ of the double cover of $\operatorname*{Sp}(n)$). The
covering projection $\pi^{\operatorname*{Mp}}:\operatorname*{Mp}%
(n)\longrightarrow\operatorname*{Sp}(n)$ is uniquely determined by its action
of the generators of $\operatorname*{Mp}(n)$.

Here is a basic example. Let $X\in\operatorname*{Sym}_{++}(n,\mathbb{R})$ and
$Y\in\operatorname*{Sym}(n,\mathbb{R})$. The associated generalized Gaussian
$\psi_{X,Y}$ is defined by%
\begin{equation}
\psi_{X,Y}(x)=(\pi\hbar)^{-n/4}(\det X)^{1/4}e^{-\tfrac{1}{2\hbar}(X+iY)x^{2}%
}. \label{fay}%
\end{equation}
Its Wigner transform is given by \cite{Bas,Birk,Wigner}
\begin{equation}
W\psi_{X,Y}(z)=(\pi\hbar)^{-n}e^{-\tfrac{1}{\hbar}Gz\cdot z} \label{phagauss}%
\end{equation}
where
\begin{equation}
G=%
\begin{pmatrix}
X+YX^{-1}Y & YX^{-1}\\
X^{-1}Y & X^{-1}%
\end{pmatrix}
. \label{gsym}%
\end{equation}
It is essential to observe that $G=G^{T}\in\operatorname*{Sp}(n)$; this is
most easily seen using the factorization
\begin{equation}
G=S^{T}S\text{ \ },\text{ }S=%
\begin{pmatrix}
X^{1/2} & 0\\
X^{-1/2}Y & X^{-1/2}%
\end{pmatrix}
\in\operatorname*{Sp}(n). \label{bi}%
\end{equation}

It can be shown by a direct calculation that the generalized Gaussians satisfy
the second order partial differential equation $\widehat{H}_{X,Y}\psi_{X,Y}=0$
where $\widehat{H}_{X,Y}$ is the operator with Weyl symbol
\begin{equation}
H_{X,Y}(x,p)=(p+Yx)\cdot(p+Yx)+X^{2}x\cdot x-\hbar\operatorname*{Tr}X.
\label{gf3}%
\end{equation}
($H_{X,Y}$ is the \textquotedblleft Fermi function\textquotedblright%
\ \cite{best} of $\psi_{X,Y}$).

Let us introduce the following notation:

\begin{itemize}
\item $\operatorname*{Gauss}_{0}(n)$ is the set of all centered Gaussian
functions (\ref{fay}): $\psi\in\operatorname*{Gauss}_{0}(n)$ if and only if
there exist $X,Y$ such that $\psi=\psi_{X,Y}$;

\item $\operatorname*{Quant}_{0}(n)$ is the set of all centered quantum blobs:
$Q\in\operatorname*{Quant}(n)$ if and only if there exists $S\in
\operatorname*{Sp}(n)$ such that $Q=S(B^{2n}(\sqrt{\hbar}))$.
\end{itemize}

In \cite{blob} we proved that:

\begin{proposition}
There exists a bijection
\[
F:\operatorname*{Gauss}\nolimits_{0}(n)\longrightarrow\operatorname*{Quant}%
\nolimits_{0}(n).
\]
That bijection is defined as follows: if $\psi_{X,Y}$ satisfies
\[
W\psi_{X,Y}(z)=(\pi\hbar)^{-n}e^{-\tfrac{1}{\hbar}Gz\cdot z}%
\]
then $F[\psi_{X,Y}]=\{z:Gz\cdot z\leq\hbar\}]$.
\end{proposition}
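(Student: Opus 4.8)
The plan is to establish the bijection in three stages: first that $F$ is well defined with image in $\operatorname*{Quant}_{0}(n)$, then injectivity, then surjectivity. For well-definedness I would start from the factorization $(\ref{bi})$, which gives $G=S^{T}S$ with $S\in\operatorname*{Sp}(n)$; hence $G\in\operatorname*{Sym}_{++}(2n,\mathbb{R})$ and $F[\psi_{X,Y}]=\{z:Gz\cdot z\leq\hbar\}$ is a genuine centered ellipsoid. Since $Gz\cdot z=S^{T}Sz\cdot z=|Sz|^{2}$, this ellipsoid is exactly $S^{-1}(B^{2n}(\sqrt{\hbar}))$, and as $S^{-1}\in\operatorname*{Sp}(n)$ it is a quantum blob. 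Thus $F$ indeed lands in $\operatorname*{Quant}_{0}(n)$.

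For injectivity I would use that a centered ellipsoid $\{z:Gz\cdot z\leq\hbar\}$ with $G$ symmetric positive definite determines $G$ uniquely. Reading off the blocks of $(\ref{gsym})$ recovers the parameters: the lower-right block yields $X=(G_{PP})^{-1}$ and the off-diagonal block then yields $Y=G_{XP}G_{PP}^{-1}$. Since the assignment $(X,Y)\longmapsto\psi_{X,Y}$ is itself injective (the quadratic form in the exponent of $(\ref{fay})$ reads off $X$ and $Y$, and the prefactor is determined), $F$ is injective.

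The main work is surjectivity. Given a quantum blob $Q=S_{0}(B^{2n}(\sqrt{\hbar}))$, I would write $Q=\{z:Gz\cdot z\leq\hbar\}$ with $G=(S_{0}S_{0}^{T})^{-1}\in\operatorname*{Sym}_{++}(2n,\mathbb{R})\cap\operatorname*{Sp}(n)$, and then reconstruct a preimage by setting $X=G_{PP}^{-1}$ (positive definite, since $G_{PP}>0$ as a diagonal block of a positive definite matrix) and $Y=G_{XP}G_{PP}^{-1}$. The hard part will be verifying that this candidate $(X,Y)$ actually reproduces $G$ in the form $(\ref{gsym})$: one must show that $Y$ is symmetric and that the top-left block identity $G_{XX}=X+YX^{-1}Y$ holds, and both facts have to be forced by $G$ being symplectic rather than merely positive definite. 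Symmetry of $Y=G_{XP}G_{PP}^{-1}$ is equivalent to $G_{PP}G_{XP}=G_{PX}G_{PP}$, which is precisely the second Robertson--Schr\"{o}dinger relation in $(\ref{RSMatrixM})$ that Lemma \ref{LemmaBlob} guarantees for a symplectic $G$.

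It remains to check the $G_{XX}$ block. Using $Y=G_{XP}G_{PP}^{-1}=G_{PP}^{-1}G_{PX}$ (the two expressions agree once $Y$ is known symmetric) one computes $X+YX^{-1}Y=G_{PP}^{-1}+G_{XP}G_{PP}^{-1}G_{PX}$, so the required identity $G_{XX}=X+YX^{-1}Y$ is equivalent to the Schur-complement relation $G/G_{PP}=G_{XX}-G_{XP}G_{PP}^{-1}G_{PX}=G_{PP}^{-1}$, that is to $(\ref{cond0})$, which holds for quantum blobs by Theorem \ref{Thm1}. With both relations in hand, $G$ equals the matrix $(\ref{gsym})$ associated with $\psi_{X,Y}$, so $F[\psi_{X,Y}]=Q$ and $F$ is surjective. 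Combining the three stages shows $F$ is a bijection between $\operatorname*{Gauss}_{0}(n)$ and $\operatorname*{Quant}_{0}(n)$.
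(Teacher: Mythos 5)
Your proof is correct. The well-definedness step is exactly the paper's own one-line argument: formula (\ref{bi}) gives $G=S^{T}S$ with $S\in\operatorname*{Sp}(n)$, so $\{z:Gz\cdot z\leq\hbar\}=S^{-1}(B^{2n}(\sqrt{\hbar}))$ is a quantum blob. The paper stops there and outsources the actual bijectivity to \cite{blob}, so the substance of your proposal is the part the paper does not print. Your injectivity argument is fine (a centered ellipsoid determines its positive definite matrix $G$, and the block structure (\ref{gsym}) recovers $X=G_{PP}^{-1}$ and $Y=G_{XP}G_{PP}^{-1}$, hence $\psi_{X,Y}$). Your surjectivity argument is also sound and correctly identifies where the symplectic hypothesis enters: symmetry of $Y$ is exactly the second relation of (\ref{RSMatrixM}) in Lemma \ref{LemmaBlob}, and the top-left block identity $G_{XX}=X+YX^{-1}Y$ reduces to the Schur-complement relation $G/G_{PP}=G_{PP}^{-1}$, i.e.\ condition (\ref{cond0}); note this also follows in one line from the first relation of (\ref{RSMatrixM}) together with the second, as in the computation preceding (\ref{omex})--(\ref{omep}), so you need not invoke the full Theorem \ref{Thm1}. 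The one pedantic point worth making explicit is that the parametrization $(X,Y)\mapsto\psi_{X,Y}$ is itself injective (the quadratic form $X+iY$ in the exponent of (\ref{fay}) is read off from $\log\psi_{X,Y}$), so that recovering $(X,Y)$ from $G$ really does recover the Gaussian; you assert this and it is true.
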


That $F[\psi_{X,Y}]\in\operatorname*{Quant}\nolimits_{0}(n)$ immediately
follows from (\ref{bi}).

\subsection{Gaussian density operators}

Let $\widehat{\rho}\in\mathcal{L}^{1}(L^{2}(\mathbb{R}^{n}))$ be a trace class
operator on $L^{2}(\mathbb{R}^{n})$. If $\operatorname*{Tr}(\widehat{\rho})=1$
and $\widehat{\rho}$ is positive semidefinite ($\widehat{\rho}\geq0$) one says
that $\widehat{\rho}$ is a density operator (it represents the mixed states in
quantum mechanics, for an up-to-date discussion of trace class operators and
their applications to quantum mechanics see \cite{QHA}). One shows, using the
spectral theorem for compact operators, that the Weyl symbol of $\widehat{\rho
}$ can be written as $(2\pi\hbar)^{n}\rho$ where $\rho$ (the \textquotedblleft
Wigner distribution of $\widehat{\rho}$\textquotedblright) is a convex sum%
\[
\rho=\sum_{j}\lambda_{j}W\psi_{j}\text{ , }\lambda_{j}\geq0\text{ , }\sum
_{j}\lambda_{j}=1
\]
where $(\psi_{j})_{j}$ is an orthonormal set of vectors in $L^{2}%
(\mathbb{R}^{n})$ (the series is absolutely convergent in $L^{2}%
(\mathbb{R}^{n})$). Of particular interest are Gaussian density operators, by
definition these are the density operators whose Wigner distribution can be
written
\begin{equation}
\rho(z)=\frac{1}{(2\pi)^{n}\sqrt{\det\Sigma}}e^{-\frac{1}{2}\Sigma
^{-1}(z-z_{0})(z-z_{0})} \label{rhoG}%
\end{equation}
where $z_{0}\in\mathbb{R}_{z}^{2n}$ and the covariance matrix $\Sigma
\in\operatorname*{Sym}_{++}(n,\mathbb{R})$ (we will from now on choose
$z_{0}=0$, but all the statements on the covariance matrix and ellipsoid that
follow are not influenced by this assumption). While the operator
$\widehat{\rho}$ with Weyl symbol $(2\pi\hbar)^{n}\rho$ automatically has
trace one, the condition $\widehat{\rho}\geq0$ is equivalent to
\cite{cogoni,dutta,Birk}
\begin{equation}
\Sigma+\frac{i\hbar}{2}J\geq0. \label{quant0}%
\end{equation}
(that is, the eigenvalues of the Hermitian matrix $\Sigma+\frac{i\hbar}{2}J$
are $\geq0$).

By definition the purity of a density operator $\rho$ is the number
$\mu(\widehat{\rho})=\operatorname*{Tr}(\widehat{\rho}^{2})$. We have
$0<\mu(\widehat{\rho})\leq1$ and $\mu(\widehat{\rho})=1$ if and only the
Wigner distribution $\rho$ of $\widehat{\rho}$ consists of a single term:
$\rho=W\psi$ for some $\psi\in L^{2}(\mathbb{R}^{n})$.

\begin{proposition}
\label{PropBlob}Let $\widehat{\rho}$ be a Gaussian density operator with
covariance matrix $\Sigma$. (i) The condition $\Sigma+\frac{i\hbar}{2}J\geq0$
holds if and only if the covariance ellipsoid $\Omega$ associated with
$\Sigma$ contains a quantum blob. (ii) We have $\mu(\widehat{\rho})=1$ if and
only $\Omega$ is a quantum blob and we have in this case $\rho=W\psi_{X,Y}$
for some pair of matrices $(X,Y)$.
\end{proposition}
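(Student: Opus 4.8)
The plan is to connect the quantum condition $\Sigma+\frac{i\hbar}{2}J\geq 0$ directly to the geometry of the covariance ellipsoid via the symplectic eigenvalues, and then to tie purity to saturation of that condition. First I would recall that the covariance ellipsoid $\Omega$ is $\{z:\tfrac12\Sigma^{-1}z\cdot z\leq 1\}$, i.e. $\{z:Mz\cdot z\leq\hbar\}$ with $M=\tfrac{\hbar}{2}\Sigma^{-1}$. By Williamson diagonalization (\ref{Williamson}) there is $S_0\in\operatorname*{Sp}(n)$ with $\Sigma=S_0^T\,\mathrm{diag}(\Lambda,\Lambda)\,S_0$ for some $\Lambda=\operatorname*{diag}(\lambda_1,\dots,\lambda_n)$, the $\lambda_j$ being the symplectic eigenvalues of $\Sigma$. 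The key computation is to diagonalize the Hermitian matrix $\Sigma+\frac{i\hbar}{2}J$ by the same symplectic transformation: since $S_0^T J S_0 = J$, one checks that $\Sigma+\frac{i\hbar}{2}J\geq 0$ is equivalent to $\mathrm{diag}(\Lambda,\Lambda)+\frac{i\hbar}{2}J\geq 0$, which decouples into $2\times 2$ blocks and reduces to the scalar conditions $\lambda_j\geq \hbar/2$ for $1\leq j\leq n$. This is the standard spectral reformulation of (\ref{quant0}).

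For part (i) I would then translate $\lambda_j\geq\hbar/2$ into the geometric statement about $\Omega$. The symplectic eigenvalues of $M=\tfrac{\hbar}{2}\Sigma^{-1}$ are $\lambda_j^\omega = \tfrac{\hbar}{2\lambda_j}$, so $\lambda_j\geq\hbar/2$ is exactly $\lambda_j^\omega\leq 1$. By the argument already used in the proof of Proposition \ref{Prop3}(i) — where $\Omega$ quantized is shown equivalent to $D\leq I$, i.e. to all symplectic eigenvalues of $M$ being $\leq 1$ — this is precisely the condition that $\Omega$ contains a quantum blob $S(B^{2n}(\sqrt{\hbar}))$. Thus (\ref{quant0}) holds if and only if $\Omega$ is quantized, establishing (i). I would simply invoke Proposition \ref{Prop3} at this step rather than re-prove the Löwner-ordering argument.

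For part (ii) the strategy is to express the purity $\mu(\widehat{\rho})=\operatorname*{Tr}(\widehat{\rho}^{\,2})$ of a Gaussian state in terms of $\det\Sigma$: using the Moyal identity (\ref{Moyal}) one has $\operatorname*{Tr}(\widehat{\rho}^{\,2})=(2\pi\hbar)^n\|\rho\|^2_{L^2}$, and for the Gaussian (\ref{rhoG}) the Gaussian integral gives $\mu(\widehat{\rho})=\big((\hbar/2)^n/\sqrt{\det\Sigma}\,\big)$, equivalently $\mu(\widehat{\rho})=\prod_{j=1}^n \tfrac{\hbar}{2\lambda_j}=\prod_j\lambda_j^\omega$. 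Combined with the constraint $\lambda_j^\omega\leq 1$ from (i), the product equals $1$ if and only if every $\lambda_j^\omega=1$, i.e. $\Lambda^\omega=I$, which by Williamson means $M=S_0^TS_0\in\operatorname*{Sp}(n)$ — exactly the condition that $\Omega$ is a quantum blob by Proposition \ref{Prop3}(ii). Finally, when $\Omega$ is a quantum blob the purity-one state $\rho=W\psi$ must be a pure Gaussian, and by the block structure (\ref{bi})–(\ref{gsym}) any centered symplectic positive-definite $G=(2/\hbar)\cdot$(the defining matrix) factors as the Wigner matrix of some $\psi_{X,Y}$, so $\rho=W\psi_{X,Y}$. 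The main obstacle I anticipate is the careful bookkeeping between the two normalizations $M=\tfrac{\hbar}{2}\Sigma^{-1}$ and the symplectic eigenvalues of $\Sigma$ versus those of $M$ (a factor $\hbar/2$ that must be tracked consistently), together with identifying the purity with $\prod_j\lambda_j^\omega$; once those are pinned down, both equivalences follow immediately from the already-established Propositions \ref{Prop3} and the $F$-bijection of the preceding subsection.
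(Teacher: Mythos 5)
Your proposal is correct and follows essentially the same route as the paper: part (ii) rests on the purity formula $\mu(\widehat{\rho})=(\hbar/2)^{n}(\det\Sigma)^{-1/2}$ combined with Williamson diagonalization and the saturation $\lambda_{j}^{\omega}=1$, exactly as in the text, and part (i) is the standard reduction of $\Sigma+\frac{i\hbar}{2}J\geq0$ to the symplectic-eigenvalue condition that Proposition \ref{Prop3} converts into containment of a quantum blob. The only difference is that you supply explicit arguments (the $2\times2$ block computation for (i) and the Moyal-identity derivation of the purity formula) where the paper cites \cite{Birk,go09}; those details check out, including the bookkeeping $\lambda_{j}^{\omega}(M)=\hbar/(2\lambda_{j})$ between $M=\frac{\hbar}{2}\Sigma^{-1}$ and $\Sigma$.
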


\begin{proof}
We have proven part (i) in \cite{Birk,go09} (also see \cite{goluPR}). To prove
(ii) we note that the purity of a Gaussian state $\widehat{\rho}$ is
\cite{Birk}%
\[
\mu(\widehat{\rho})=\left(  \frac{\hbar}{2}\right)  ^{n}(\det\Sigma)^{-1/2}%
\]
hence $\mu(\widehat{\rho})=1$ if and only if $\det\Sigma=(\hbar/2)^{2n}$. Let
$\lambda_{1}^{\omega},...,\lambda_{n}^{\omega}$ be the symplectic eigenvalues
of $\Sigma$ as in the proof of Theorem \ref{Thm1}; in view of Williamson's
symplectic diagonalization theorem there exists $S\in\operatorname*{Sp}(n)$
such that $\Sigma=S^{-1}D(S^{T})^{-1}$ where $D=%
\begin{pmatrix}
\Lambda^{\omega} & 0\\
0 & \Lambda^{\omega}%
\end{pmatrix}
$ with $\Lambda^{\omega}=\operatorname*{diag}(\lambda_{1}^{\omega}%
,...,\lambda_{n}^{\omega})$. The quantum condition (\ref{quant0}) is
equivalent to $\lambda_{j}^{\omega}\geq\hbar/2$ for all $j$ hence
\[
\det\Sigma=(\lambda_{1}^{\sigma})^{2}\cdot\cdot\cdot(\lambda_{n}^{\sigma}%
)^{2}=1
\]
if and only if $\lambda_{j}^{\omega}=\hbar/2$ for all $j$, hence $\Sigma
=\frac{\hbar}{2}S^{-1}(S^{T})^{-1}$ and $\Omega=S(B^{2n}(\sqrt{\hbar}))$ is a
quantum blob.
\end{proof}

\subsection{A characterization of Gaussian density operators}

We are going to apply Theorem \ref{Thm1} to characterize pure Gaussian density
operators without prior knowledge of the full covariance matrix. This is
related to the so-called \textquotedblleft Pauli reconstruction
problem\textquotedblright\ \cite{Pauli} we have discussed in \cite{gopauli}.
The latter can be reformulated in terms of the Wigner transform as follows:
given a function $\psi\in L^{1}(\mathbb{R}^{n})\cap L^{2}(\mathbb{R}^{n})$
whose Fourier transform is also in $L^{1}(\mathbb{R}^{n})\cap L^{2}%
(\mathbb{R}^{n})$ the question is whether we reconstruct $\psi$ from the
knowledge of the marginal distributions%
\begin{equation}
\int W\psi(x,p)d^{n}p=|\psi(x)|^{2}\text{ \ },\text{ \ }\int W\psi
(x,p)d^{n}x=|\widehat{\psi}(p)|^{2} \label{marg}%
\end{equation}
where the Fourier transform $\widehat{\psi}$ of $\psi$ is given by
\begin{equation}
\widehat{\psi}(p)=\left(  \frac{1}{2\pi\hbar}\right)  ^{n/2}\int e^{-\frac
{i}{\hbar}px}\psi(x)d^{n}x. \label{FT}%
\end{equation}
The answer to Pauli's question is negative; the study of this problem has led
to many developments, one of them being the theory of symplectic quantum
tomography (see \textit{e.g}. \cite{ib}). The following result is essentially
an analytic restatement of Theorem \ref{Thm1}:

\begin{theorem}
\label{Thm2}Let $\widehat{\rho}\in\mathcal{L}^{1}(L^{2}(\mathbb{R}^{n}))$ be a
density operator with Gaussian Wigner distribution
\[
\rho(z)=\frac{1}{(2\pi)^{n}\sqrt{\det\Sigma}}e^{-\frac{1}{2}\Sigma^{-1}z\cdot
z}.
\]
Then $\widehat{\rho}$ is a pure density operator if and only if%
\begin{equation}
\Phi(x)=2^{n}\int\rho(x,p)d^{n}p \label{fiw}%
\end{equation}
where $\Phi$ is the Fourier transform of the function $p\longmapsto
\rho(0,p/2)$.
\end{theorem}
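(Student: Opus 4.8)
The plan is to recognize Theorem \ref{Thm2} as the Gaussian incarnation of Theorem \ref{Thm1}. By Proposition \ref{PropBlob}(ii), $\widehat{\rho}$ is pure if and only if its covariance ellipsoid $\Omega=\{z:\tfrac{1}{2}\Sigma^{-1}z\cdot z\leq1\}$ is a quantum blob, and by Theorem \ref{Thm1} the latter happens if and only if $(\Pi_{\ell_{X}}\Omega)^{\hbar}=\Omega\cap\ell_{P}$, equivalently (by reflexivity of polar duality) if and only if $\Pi_{\ell_{X}}\Omega=(\Omega\cap\ell_{P})^{\hbar}$. So it is enough to show that the analytic identity $\Phi(x)=2^{n}\int\rho(x,p)\,d^{n}p$ is precisely this last geometric equality written out for Gaussians.

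First I would match the three analytic objects with the three geometric ones. By Lemma \ref{LemmaProj} the position marginal $\int\rho(x,p)\,d^{n}p$ is the Gaussian with exponent $-\tfrac{1}{2}\Sigma_{XX}^{-1}x\cdot x$, whose level ellipsoids are $\Pi_{\ell_{X}}\Omega$. Putting $x=0$ and using the block inversion (\ref{covinv}), namely $(\Sigma^{-1})_{PP}=(\Sigma/\Sigma_{XX})^{-1}$, the slice $\rho(0,p)$ is a Gaussian in $p$ whose level sets are $\Omega\cap\ell_{P}$. Finally, the $\hbar$-Fourier transform (kernel $e^{-\frac{i}{\hbar}x\cdot p}$) turns a centered Gaussian covariance into its $\hbar^{2}$-scaled inverse; this is exactly how polar duality of Gaussian covariance ellipsoids is realized analytically, consistently with Proposition \ref{Propinter} and the ellipsoid formula (\ref{Xelldual}). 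Hence $\Phi$, the Fourier transform of the rescaled slice $p\mapsto\rho(0,p/2)$, is a Gaussian in $x$ whose level ellipsoids are the dual $(\Omega\cap\ell_{P})^{\hbar}$.

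The computational core is a direct Gaussian--Fourier evaluation. One finds $\Phi(x)$ proportional to $e^{-\frac{2}{\hbar^{2}}(\Sigma/\Sigma_{XX})x\cdot x}$ and $2^{n}\int\rho(x,p)\,d^{n}p$ proportional to $e^{-\frac{1}{2}\Sigma_{XX}^{-1}x\cdot x}$. The point is that the substitution $p\mapsto p/2$ contributes a Jacobian $2^{n}$ and, through the determinant identity $\det\Sigma=\det\Sigma_{XX}\cdot\det(\Sigma/\Sigma_{XX})$, the two normalizing constants collapse to the common value $2^{n}(2\pi)^{-n/2}(\det\Sigma_{XX})^{-1/2}$. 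Therefore the stated identity holds if and only if the exponents agree, i.e. if and only if
\[
\Sigma/\Sigma_{XX}=\tfrac{\hbar^{2}}{4}\Sigma_{XX}^{-1},
\]
which by Lemma \ref{LemmaProj} and (\ref{Xelldual}) is exactly $\Pi_{\ell_{X}}\Omega=(\Omega\cap\ell_{P})^{\hbar}$; by Theorem \ref{Thm1} and Proposition \ref{PropBlob}(ii) this is equivalent to $\widehat{\rho}$ being pure. For the forward direction there is a quick cross-check: writing $\rho=W\psi$ with $\psi$ a centered, hence even, Gaussian, formula (\ref{wigtra}) together with the substitution $y=2u$ give $\Phi(x)=2^{n}\psi(-x)\psi^{\ast}(x)=2^{n}|\psi(x)|^{2}$, which is $2^{n}$ times the marginal by (\ref{marg}).

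The only real obstacle is the normalization bookkeeping in the third step: one must carry the factors of $\hbar$ and $2$ through the Schur-complement inversion, the $p\mapsto p/2$ rescaling, and the Fourier kernel, and verify that they conspire --- via $\det\Sigma=\det\Sigma_{XX}\det(\Sigma/\Sigma_{XX})$ --- to equalize the prefactors on the two sides, so that the whole statement reduces to the single matrix equation $\Sigma/\Sigma_{XX}=\frac{\hbar^{2}}{4}\Sigma_{XX}^{-1}$. All the geometric substance is already supplied by Theorem \ref{Thm1}.
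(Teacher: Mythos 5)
Your proposal is correct and follows essentially the same route as the paper: both sides of (\ref{fiw}) are evaluated as explicit Gaussians, the normalizing constants are seen to agree automatically (via $\det\Sigma=\det\Sigma_{XX}\det(\Sigma/\Sigma_{XX})$, equivalently $\det M=1$), and the identity collapses to the single matrix equation $\Sigma/\Sigma_{XX}=\tfrac{\hbar^{2}}{4}\Sigma_{XX}^{-1}$, which is condition (\ref{cond0}) of Theorem \ref{Thm1}, so that purity follows from Proposition \ref{PropBlob}(ii). The only cosmetic difference is that you run one biconditional computation in the $\Sigma$-blocks where the paper treats the two implications separately (invoking $\rho=W\psi_{X,Y}$ for the forward direction, which you reproduce as a cross-check in the spirit of Remark \ref{Rem2}).
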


\begin{proof}
We begin by noting that by the well-known formula about marginals in
probability theory we have%
\begin{equation}
\int\rho(x,p)d^{n}p=\frac{1}{(2\pi)^{n/2}\sqrt{\det\Sigma_{XX}}}e^{-\tfrac
{1}{2}\Sigma_{XX}^{-1}x\cdot x}. \label{psix}%
\end{equation}
Returning to the notation $M=\frac{\hbar}{2}\Sigma^{-1}$ we have%
\[
\rho(z)=(\pi\hbar)^{-n}(\det M)^{1/2}e^{-\frac{1}{\hbar}Mz\cdot z}%
\]
and the margin formula (\ref{psix}) reads%
\begin{equation}
\int\rho(x,p)d^{n}p=(\pi\hbar)^{-n/2}(\det M/M_{PP})^{1/2}e^{-\frac{1}{\hbar
}(M/M_{PP})x\cdot x}. \label{romm}%
\end{equation}
Assume now that $\widehat{\rho}$ is a pure density operator and let us show
that (\ref{fiw}) holds (also see Remark \ref{Rem2} below). In view of
Proposition \ref{PropBlob} we then have $\rho=W\psi_{X,Y}$ for some Gaussian
(\ref{fay}) and thus $\rho(z)=(\pi\hbar)^{-n}e^{-\tfrac{1}{\hbar}Gz\cdot z}$
where $G$ is the symmetric symplectic matrix (\ref{gsym}). Using the first
marginal property (\ref{marg}) and the definition of $\psi_{X,Y}$ it follows
that
\[
\int\rho(x,p)d^{n}p=|\psi_{X,Y}(x)|^{2}=(\pi\hbar)^{-n/2}(\det X)^{1/2}%
e^{-\tfrac{1}{\hbar}Xx\cdot x}.
\]
On the other hand
\[
W\psi_{X,Y}(0,p/2)=(\pi\hbar)^{-n}e^{-\frac{1}{4\hbar}X^{-1}p\cdot p}%
\]
and its Fourier transform is%
\[
\Phi(x)=\left(  \frac{2}{\pi\hbar}\right)  ^{n}(\det X)^{1/2}e^{-\frac
{1}{\hbar}Xx\cdot x}%
\]
hence the equality (\ref{fiw}). Assume now that, conversely, (\ref{fiw})
holds. We have \ \ \
\[
\rho(0,p/2)=(\pi\hbar)^{-n}(\det M)^{1/2}e^{-\frac{1}{4\hbar}M_{PP}p\cdot p}.
\]
and the Fourier transform $\Phi$ of the function $p\longmapsto\rho(0,p/2)$ is
given by
\[
\Phi(p)=\left(  \frac{2}{\pi\hbar}\right)  ^{n}(\det M)^{1/2}(\det
M_{PP})^{-1/2}e^{-\frac{1}{\hbar}M_{PP}^{-1}x\cdot x}.
\]
The equality (\ref{fiw}) requires that
\[
(\det M)^{1/2}(\det M_{PP})^{-1/2}e^{-\frac{1}{\hbar}M_{PP}^{-1}x\cdot
x}=(\det M/M_{PP})^{1/2}e^{-\frac{1}{\hbar}(M/M_{PP})x\cdot x}%
\]
that is, equivalently,
\begin{gather*}
M_{PP}^{-1}=(M/M_{PP})\\
(\det M)^{1/2}(\det M_{PP})^{-1/2}=(\det M/M_{PP})^{1/2}.
\end{gather*}
The first of these two conditions implies that the covariance ellipsoid
$\Omega$ is a quantum blob (formula (\ref{cond0})) in Theorem \ref{Thm1}); the
second condition is then automatically satisfied since $\det M=1$ \ in this case.
\end{proof}

\begin{remark}
\label{Rem2}Condition (\ref{fiw}) is actually satisfied by \emph{all} even
Wigner transformations (and hence by all pure density operators corresponding
to an even function $\psi$) . suppose indeed that $\rho=W\psi$ for some
suitable even function $\psi\in L^{2}(\mathbb{R}^{n})$. Then
\[
W\psi(0,p/2)=(\pi\hbar)^{-n}\int e^{\frac{i}{\hbar}p\cdot y}|\psi(y)|^{2}%
d^{n}y;
\]
Taking the Fourier transform of both sides and using the first marginal
property (\ref{marg}) yields the identity (\ref{fiw}).
\end{remark}

\section{Perspectives and Comments}

Among all states (classical, or quantum) the Gaussians are those which are
entirely characterized by their covariance matrices. The notion of polar
duality thus appears informally as being a generalization of the uncertainty
principle of quantum mechanics as expressed in terms of variances and
covariances. Polar duality actually is a more general concept than the usual
uncertainty principle, expressed in terms of covariances and variances of
position and momentum variables (and the derived notion of quantum blob). As
was already in the work of Uffink and Hilgevoord \cite{hi,hiuf}, variances and
covariances are satisfactory measures of uncertainties only for Gaussian (or
almost Gaussian) distribution. For more general distributions having
nonvanishing \textquotedblleft tails\textquotedblright\ they can lead to gross
errors and misinterpretation. Another advantage of the notion of polar duality
is that it might precisely be extended to study uncertainties when
non-Gaussianity appears (for an interesting characterization of
non-Gaussianity see \cite{Link}). Instead of considering ellipsoids $X$ in
configuration space $\mathbb{R}_{x}^{n}$ one might want to consider sets $X$
which are only convex. In this case the polar dual $X^{\hbar}$ is still
well-defined and one might envisage, using the machinery of the Minkowski
functional to generalize the results presented here to general non-centrally
symmetric convex bodies in $\mathbb{R}_{x}^{n}$. The difficulty comes from the
fact that we then need to choose the correct center with respect to which the
polar duality is defined since there is no privileged \textquotedblleft
center\textquotedblright\ \cite{arkami}; different choices may lead to polar
duals with very different sizes and volumes. These are difficult questions,
but they may lead to a better understanding of very general uncertainty
principles for the density operators of quantum mechanics.

\begin{acknowledgement}
Maurice de Gosson has been financed by the Grant P 33447 N of the Austrian
Research Foundation FWF.
\end{acknowledgement}

\end{document}